\providecommand{\tabularnewline}{\\}
\newcolumntype{d}[1]{D{.}{.}{#1}}
\newcolumntype{t}[1]{D{,}{,}{#1}}
\newcolumntype{i}[1]{D{.}{}{#1}}
\newtheorem{theorem}{Theorem}[section]
\newtheorem{definition}{Definition}[section]
\newtheorem{lemma}{Lemma}[section]
\newtheorem{proposition}{Proposition}[section]
\theoremstyle{plain}
\newtheorem*{asSX}{Assumption SX}
\newtheorem*{asL}{Assumption L}
\newtheorem*{asK}{Assumption K}
\newtheorem*{asM1}{Assumption M1}
\newtheorem*{asM2}{Assumption M2}
\newtheorem*{asB}{Assumption B}
\newtheorem*{asCS}{Assumption CS}
\numberwithin{equation}{section}
\begin{document}
\title{Optimal Dynamic Treatment Regimes\\
 and Partial Welfare Ordering\thanks{For helpful comments and discussions, the author is grateful to Donald
Andrews, Isaiah Andrews, Junehyuk Jung, Yuichi Kitamura, Hiro Kaido,
Shakeeb Khan, Adam McCloskey, Susan Murphy, Takuya Ura, Ed Vytlacil,
Shenshen Yang, participants in the 2021 Cowles Conference, the 2021
North American Winter Meeting, the 2020 European Winter Meeting, and
the 2020 World Congress of the Econometric Society, 2019 CEMMAP \&
WISE conference, the 2019 Asian Meeting of the Econometric Society,
the Bristol Econometrics Study Group Conference, the 2019 Midwest
Econometrics Group Conference, the 2019 Southern Economic Association
Conference, and in seminars at UCL, U of Toronto, Simon Fraser U,
Rice U, UIUC, U of Bristol, Queen Mary London, NUS, and SMU. \UrlFont{\sffamily}\protect\href{mailto:sukjin.han@gmail.com}{sukjin.han@gmail.com}}}
\author{Sukjin Han\\
 Department of Economics\\
 University of Bristol}
\date{First Draft: August 8, 2019 \\
 This Draft: \today}

\maketitle
\vspace{-0.2cm}

\begin{abstract}
Dynamic treatment regimes are treatment allocations tailored to heterogeneous
individuals. The optimal dynamic treatment regime is a regime that
maximizes counterfactual welfare. We introduce a framework in which
we can partially learn the optimal dynamic regime from observational
data, relaxing the sequential randomization assumption commonly employed
in the literature but instead using (binary) instrumental variables.
We propose the notion of sharp partial ordering of counterfactual
welfares with respect to dynamic regimes and establish mapping from
data to partial ordering via a set of linear programs. We then characterize
the identified set of the optimal regime as the set of maximal elements
associated with the partial ordering. We relate the notion of partial
ordering with a more conventional notion of partial identification
using topological sorts. Practically, topological sorts can be served
as a policy benchmark for a policymaker. We apply our method to understand
returns to schooling and post-school training as a sequence of treatments
by combining data from multiple sources. The framework of this paper
can be used beyond the current context, e.g., in establishing rankings
of multiple treatments or policies across different counterfactual
scenarios.

\vspace{0.1in}

\noindent \textit{JEL Numbers:} C14, C32, C33, C36

\noindent \textit{Keywords:} Optimal dynamic treatment regime, endogenous
treatments, dynamic treatment effect, partial identification, instrumental
variable, linear programming. 
\end{abstract}

\section{Introduction\label{sec:Introduction}}

Dynamic treatment regimes are dynamically personalized treatment
allocations. Given that individuals are heterogeneous, allocations
tailored to heterogeneity can improve overall welfare. Define a dynamic
treatment regime $\boldsymbol{\delta}(\cdot)$ as a sequence of binary
rules $\delta_{t}(\cdot)$ that map previous outcome and treatment
(and possibly other covariates) onto current allocation decisions:
$\delta_{t}(y_{t-1},d_{t-1})=d_{t}\in\{0,1\}$ for $t=1,...,T$.
The motivation for being adaptive to the previous outcome is that
it may contain information on unobserved heterogeneity that is not
captured in covariates. Then the optimal dynamic treatment regime,
which is this paper's main parameter of interest, is defined as a
regime that maximizes certain counterfactual welfare:
\begin{align}
\boldsymbol{\delta}^{*}(\cdot) & =\arg\max_{\boldsymbol{\delta}(\cdot)}W_{\boldsymbol{\delta}}.\label{eq:optDTR}
\end{align}
This paper investigates the possibility of identifiability of the
optimal dynamic regime $\boldsymbol{\delta}^{*}(\cdot)$ from data
that are generated from randomized experiments in the presence of
non-compliance or more generally from observational studies in multi-period
settings.

Optimal treatment regimes have been extensively studied in the biostatistics
literature (\citet{murphy2001marginal}, \citet{murphy2003optimal},
and \citet{robins2004optimal}, among others). These studies typically
rely on an ideal multi-stage experimental environment that satisfies
sequential randomization. Based on such experimental data, they identify
optimal regimes that maximize welfare, defined as the average counterfactual
outcome. However, non-compliance is prevalent in experiments, and
more generally, treatment endogeneity is a marked feature in observational
studies.\footnote{This point is also acknowledged as a concluding remark in \citet{murphy2001marginal}.}
This may be one reason the vast biostatistics literature has not yet
gained traction in other fields of social science, despite the potentially
fruitful applications of optimal dynamic regimes in various policy
evaluations.

To illustrate the policy relevance of the optimal dynamic regime,
consider the labor market returns to high school education and post-school
training for disadvantaged individuals. A policymaker may be interested
in learning a schedule of allocation rules $\boldsymbol{\delta}(\cdot)=(\delta_{1},\delta_{2}(\cdot))$
that maximizes the employment rate $W_{\boldsymbol{\delta}}=E[Y_{2}(\boldsymbol{\delta})]$,
where $\delta_{1}\in\{0,1\}$ assigns a high school diploma, $\delta_{2}(y_{1},\delta_{1})\in\{0,1\}$
assigns a job training program based on $\delta_{1}$ and earlier
earnings $y_{1}\in\{0,1\}$ (low or high), and $Y_{2}(\boldsymbol{\delta})$
indicates the counterfactual employment status under regime $\boldsymbol{\delta}(\cdot)$.
Suppose the optimal regime $\boldsymbol{\delta}^{*}(\cdot)$ is such
that $\delta_{1}^{*}=1$, $\delta_{2}^{*}(0,\delta_{1}^{*})=1$, and
$\delta_{2}^{*}(1,\delta_{1}^{*})=0$; that is, it turns out optimal
to assign a high school diploma to all individuals and a training
program to individuals with low earnings. One of the policy implications
of such $\boldsymbol{\delta}^{*}(\cdot)$ is that the average job
market performance can be improved by job trainings focusing on low
performance individuals complementing with high school education.
A static regime---where $\delta_{t}(\cdot)$ is a constant function---is
a special case of a dynamic regime. In this sense, the optimal dynamic
regime provides richer policy candidates than what can be learned
from dynamic complementarity (\citet{cunha2007technology}, \citet{cellini2010value},
\citet{almond2013fetal}, \citet{johnson2019reducing}). In learning
$\boldsymbol{\delta}^{*}(\cdot)$ in this example, observational data
may only be available where the observed treatments (schooling decisions)
are endogenous.

This paper proposes a nonparametric framework, in which we can at
least partially learn the ranking of counterfactual welfares $W_{\boldsymbol{\delta}}$'s
and hence the optimal dynamic regime $\boldsymbol{\delta}^{*}(\cdot)$.
We view that it is important to avoid making stringent modeling assumptions
in the analysis of personalized treatments, because the core motivation
of the analysis is individual heterogeneity, which we want to keep
intact as much as possible. Instead, we embrace the partial identification
approach. Given the observed distribution of sequences of outcomes
and endogenous treatments and using the instrumental variable (IV)
method, we establish sharp partial ordering of welfares, and characterize
the identified set of optimal regimes as a discrete subset of all
possible regimes. We define welfare as a linear functional of the
joint distribution of counterfactual outcomes across periods. Examples
of welfare include the average counterfactual terminal outcome commonly
considered in the literature and as shown above.  We assume we are
equipped with some IVs that are possibly binary. We show that it is
helpful to have a sequence of IVs generated from sequential experiments
or quasi-experiments. Examples of the former are increasingly common
as forms of random assignments or encouragements in medical trials,
public health and educational interventions, and A/B testing on digital
platforms. Examples of the latter can be some combinations of traditional
IVs and regression discontinuity designs. Our framework also accommodates
a single binary IV in the context of dynamic treatments and outcomes
(e.g., \citet{cellini2010value}). The identifying power in such a
case is investigated in simulation. The partial ordering and identified
set proposed in this paper enable ``sensitivity analyses.'' That
is, by comparing a chosen regime (e.g., from a parametric approach)
with these benchmark objects, one can determine how much the former
is led by assumptions and how much is informed by data. Such a practice
also allows us to gain insight into data requirements to achieve a
certain level of informativeness.

\begin{figure}
\begin{centering}
\includegraphics[scale=0.28]{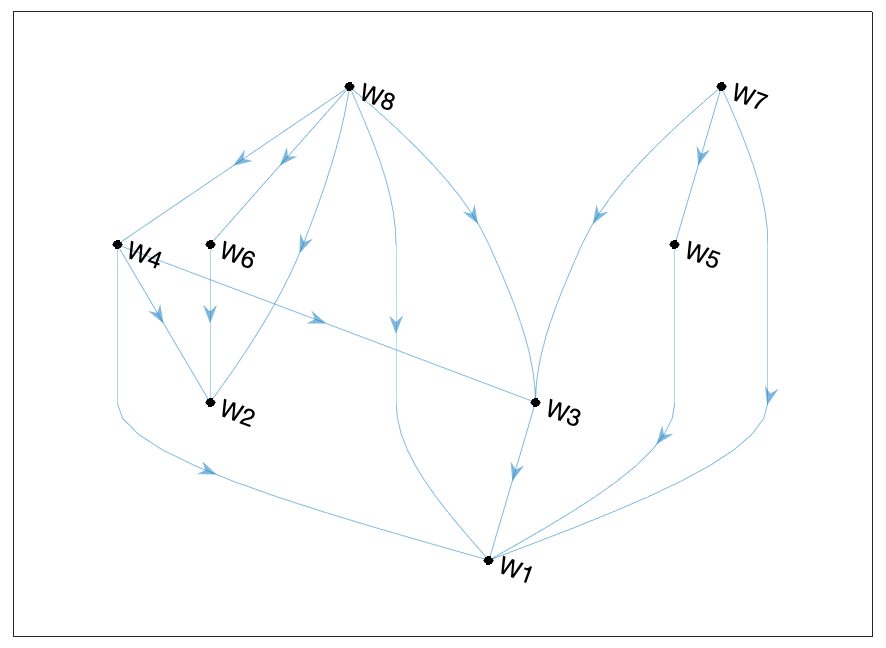}
\par\end{centering}
\caption{An Example of Sharp Partial Ordering of Welfares}
\label{fig:dag-1}
\end{figure}
The identification analysis is twofold. In the first part, we establish
mapping from data to sharp partial ordering of counterfactual welfares
with respect to possible regimes. The point identification of $\boldsymbol{\delta}^{*}(\cdot)$
will be achieved by establishing the total ordering of welfares, which
is not generally possible in this flexible nonparametric framework
with limited exogenous variation. Figure \ref{fig:dag-1} is an example
of partial ordering that we calculated by applying this paper's theory
and using simulated data. Here, we consider a two-period case as in
the motivating example above, which yields eight possible $\boldsymbol{\delta}(\cdot)$'s
and corresponding welfares, and ``$\rightarrow$'' corresponds to
the relation ``$>$''. To establish the partial ordering, we first
characterize bounds on the difference between a pair of welfares as
the set of optima of linear programs, and we do so for all possible
welfare pairs. The bounds on welfare gaps are informative about whether
welfares are comparable or not, and when they are, how to rank them.
Then we show that although the bounds are calculated from separate
optimizations, the partial ordering is consistent with \textit{common}
data-generating processes. The partial ordering obtained in this
way is shown to be sharp in the sense that will become clear later.
Note that each welfare gap measures the \textit{dynamic treatment
effect}. The partial ordering concisely (and tightly) summarizes
the identified signs of these treatment effects, and thus can be a
parameter of independent interest.

In the second part of the analysis, given the sharp partial ordering,
we show that the identified set can be characterized as the set of
maximal elements associated with the partial ordering, i.e., the set
of regimes that are \textit{not inferior}. For example, according
to Figure \ref{fig:dag-1}, the identified set consists of regimes
7 and 8. Given the partial ordering, we also calculate topological
sorts, which are total orderings that do not violate the underlying
partial ordering. Theoretically, topological sorts can be viewed as
observationally equivalent total orderings, which insight relates
the partial ordering we consider with a more conventional notion of
partial identification. Practically, topological sorts can be served
as a policy benchmark that a policymaker can be equipped with. If
desired, linear programming can be solved to calculate bounds on a
small number of sorted welfares (e.g., top-tier welfares).

Given the minimal structure we impose in the data-generating process,
the size of the identified set may be large in some cases. Such an
identified set may still be useful in eliminating suboptimal regimes
or warning about the lack of informativeness of the data. Often,
however, researchers are willing to impose additional assumptions
to gain identifying power. We propose identifying assumptions, such
as uniformity assumptions that generalize the monotonicity assumption
in \citet{imbens1994identification}, an assumption about an agent's
learning, Markovian structure, and stationarity. These assumptions
tighten the identified set by reducing the dimension of the simplex
in the linear programming, thus producing a denser partial ordering.
We show that these assumptions are easy to impose in our framework.

This paper makes several contributions. To our best knowledge, this
paper is first in the literature that considers the identifiability
of optimal \textit{dynamic} regimes under treatment endogeneity. The
pioneering work by \citet{murphy2003optimal} and subsequent works
consider point identification of optimal dynamic regimes under the
sequential randomization assumption. This paper brings this literature
to observational contexts. Recently, \citet{han2018nonparametric},
\citet{han2020comment}, \citet{cui2019semiparametric}, and \citet{qiu2020optimal}
relax sequential randomization and establish identification of dynamic
average treatment effects and/or optimal regimes using instrumental
variables. In a single-period setup, they consider a regime that is
a mapping only from covariates, but not previous outcomes and treatments,
to an allocation. They focus on point identification by imposing assumptions
such as the existence of additional exogenous variables in a multi-period
setup (\citet{han2018nonparametric}), or the zero correlation between
unmeasured confounders and compliance types (\citet{cui2019semiparametric,qiu2020optimal})
or uniformity (\citet{han2020comment}). The dynamic effects of treatment
timing (i.e., irreversible treatments) have been considered in \citet{heckman2007dynamic}
and \citet{heckman2016dynamic} who utilize exclusion restrictions
and infinite support assumptions. A related staggered adoption design
was recently studied in multi-period difference-in-differences settings
under treatment heterogeneity by \citet{athey2018design}, \citet{callaway2018difference},
and \citet{abraham2018estimating}.\footnote{\citet{de2020two} consider a similar problem but without necessarily
assuming staggered adoption.} This paper complements these papers by considering treatment scenarios
of multiple dimensions with adaptivity as the key ingredient.

Second, this paper contributes to the literature on partial identification
that utilizes linear programming approach, which has early examples
as \citet{balke1997bounds} and \citet{manski2007partial}, and appears
recently in \citet{torgovitsky2016partial}, \citet{deb2017revealed},
\citet{mogstad2018using}, \citet{kitamura2019nonparametric}, \citet{machado2018instrumental},
\citet{tebaldi2019nonparametric}, \citet{kamat2017identification},
\citet{gunsilius2019bounds}, and \citet{han2020sharp}, to name a
few. The advantages of this approach is that (i) bounds can be automatically
obtained even when analytical derivation is not possible, (ii) the
proof of sharpness is straightforward and not case-by-case, and (iii)
it can streamline the analysis of different identifying assumptions.
The dynamic framework of this paper complicates the identification
analysis, which therefore fully benefits from these advantages. However,
a distinct feature of the present paper is that the linear programming
approach is used in establishing a sharp partial ordering across counterfactual
objects---a novel concept in the literature---and in such a way
that separate optimizations yield a common object, namely the partial
ordering. The framework of this paper can also be useful in other
settings where the goal is to compare welfares across multiple treatments
and regimes---e.g., personalized treatment rules---or more generally,
to establish rankings of policies across different counterfactual
scenarios and find the best ones.

Third, we apply our method to conduct a policy analysis with schooling
and post-school training as a sequence of treatments, which is to
our knowledge a novel attempt in the literature. We consider dynamic
treatment regimes of allocating a high school diploma and, given pre-program
earnings, a job training program for economically disadvantaged population.
By combining data from the Job Training Partnership Act (JTPA), the
US Census, and the National Center for Education Statistics (NCES),
we construct a data set with a sequence of instruments that is used
to estimate the partial ordering of expected earnings and the identified
set of the optimal regime. Even though only partial orderings are
recovered, we can conclude with certainty that allocating the job
training program only to the low earning type is welfare optimal.
We also find that more costly regimes are not necessary welfare-improving.

The dynamic treatment regime considered in this paper is broadly related
to the literature on statistical treatment rules, e.g., \citet{manski2004statistical},
\citet{hirano2009asymptotics}, \citet{bhattacharya2012inferring},
\citet{stoye2012minimax}, \citet{kitagawa2018should}, \citet{kasy2016partial},
and \citet{athey2017efficient}. However, our setting, assumptions,
and goals are different from those in these papers. In a single-period
setting, they consider allocation rules that map covariates to decisions.
They impose assumptions that ensure point identification, such as
(conditional) unconfoundedness, and focus on establishing the asymptotic
optimality of the treatment rules, with \citet{kasy2016partial} the
exception.\footnote{\citet{athey2017efficient}'s framework allows observational data
with endogenous treatments as a special case, but the conditional
homogeneity of treatment effects is assumed.} \citet{kasy2016partial} focuses on establishing partial ranking
by comparing a pair of treatment-allocating probabilities as policies.
The notion of partial identification of ranking is related to ours,
but we introduce the notion of sharpness of a partially ordered set
with discrete policies and a linear programming approach to achieve
that. Another distinction is that we consider a dynamic setup. Finally,
in order to focus on the challenge with endogeneity, we consider a
simple setup where the exploration and exploitation stages are separated,
unlike in the literature on bandit problems (\citet{kock2017optimal},
\citet{kasy2019adaptive}, \citet{athey2019machine}). We believe
the current setup is a good starting point.

In the next section, we introduce the dynamic regimes and related
counterfactual outcomes, which define the welfare and the optimal
regime. Section \ref{sec:Motivating-Example:-Returns} provides a
motivating example. Section \ref{sec:Partial-Ordering-and} conducts
the main identification analysis by constructing the partial ordering
and characterizing the identified set. Sections \ref{sec:Topological-Sorting}--\ref{sec:Cardinality-Reduction}
introduce topological sorts and additional identifying assumptions
and discuss cardinality reduction for the set of regimes. Section
\ref{sec:Numerical-Studies} illustrates the analysis with numerical
exercises, and Section \ref{sec:Application} presents the empirical
application on returns to schooling and job training. Finally, Section
\ref{sec:Discussions:-Estimation-and} concludes by discussing inference.
Most proofs are collected in the Appendix.

In terms of notation, let $\boldsymbol{W}^{t}\equiv(W_{1},..,W_{t})$
denote a vector that collects r.v.'s $W_{t}$ across time up to $t$,
and let $\boldsymbol{w}^{t}$ be its realization. Most of the time,
we write $\boldsymbol{W}\equiv\boldsymbol{W}^{T}$ for convenience.
We abbreviate ``with probability one'' as ``w.p.1'' and ``with
respect to'' as ``w.r.t.'' The symbol ``$\perp$'' denotes statistical
independence.

\section{Dynamic Regimes and Counterfactual Welfares\label{sec:Dynamic-Regimes-and}}

\subsection{Dynamic Regimes}

Let $t$ be the index for a period or stage. For each $t=1,...,T$
with fixed $T$, define an \textit{adaptive treatment rule} $\delta_{t}:\{0,1\}^{t-1}\times\{0,1\}^{t-1}\rightarrow\{0,1\}$
that maps the lags of the realized binary outcomes and treatments
$\boldsymbol{y}^{t-1}\equiv(y_{1},...,y_{t-1})$ and $\boldsymbol{d}^{t-1}\equiv(d_{1},...,d_{t-1})$
onto a deterministic treatment allocation $d_{t}\in\{0,1\}$:
\begin{align}
\delta_{t}(\boldsymbol{y}^{t-1},\boldsymbol{d}^{t-1}) & =d_{t}.\label{eq:trt_rule}
\end{align}
This adaptive rule also appears in, e.g., \citet{murphy2003optimal}.
The rule can also be a function of other discrete covariates, which
is a straightforward extension and thus is not considered here for
brevity. A special case of \eqref{eq:trt_rule} is a static rule where
$\delta_{t}(\cdot)$ is only a function of covariates but not $(\boldsymbol{y}^{t-1},\boldsymbol{d}^{t-1})$
(\citet{han2018nonparametric}, \citet{cui2019semiparametric}) or
a constant function.\footnote{This means that our term of ``static regime'' is narrowly defined
than in the literature. In the literature, a regime is sometimes called
dynamic even if it is only a function of covariates.} Binary outcomes and treatments are prevalent, and they are helpful
in analyzing, interpreting, and implementing dynamic regimes (\citet{zhang2015using}).
Still, extending the framework to allow for multi-valued discrete
variables is possible. Whether the rule is dynamic or static, we only
consider deterministic rules $\delta_{t}(\cdot)\in\{0,1\}$. In Appendix
\ref{subsec:Stochastic}, we extend this to stochastic rules $\tilde{\delta}_{t}(\cdot)\in[0,1]$
and show why it is enough to consider deterministic rules in some
cases. Then, a \textit{dynamic regime} up to period $t$ is defined
as a vector of all treatment rules: 
\begin{align*}
\boldsymbol{\delta}^{t}(\cdot) & \equiv\left(\delta_{1},\delta_{2}(\cdot),...,\delta_{t}(\cdot)\right).
\end{align*}
Let $\boldsymbol{\delta}(\cdot)\equiv\boldsymbol{\delta}^{T}(\cdot)\in\mathcal{D}$
where $\mathcal{D}$ is the set of all possible regimes.\footnote{We can allow $\mathcal{D}$ to be a strict subset of the set of all
possible regimes; see Section \ref{sec:Cardinality-Reduction} for
this relaxation.} Throughout the paper, we will mostly focus on the leading case with
$T=2$ for simplicity. Also, this case already captures the essence
of the dynamic features, such as adaptivity and complementarity. Table
\ref{tab:regimes} lists all possible dynamic regimes $\boldsymbol{\delta}(\cdot)\equiv\left(\delta_{1},\delta_{2}(\cdot)\right)$
as contingency plans. 
\begin{table}
\begin{centering}
\begin{tabular}{|c|c|c|c|}
\hline 
Regime \#  & $\delta_{1}$  & $\delta_{2}(1,\delta_{1})$  & $\delta_{2}(0,\delta_{1})$\tabularnewline
\hline 
\hline 
1  & 0  & 0  & 0\tabularnewline
\hline 
2  & 1  & 0  & 0\tabularnewline
\hline 
3  & 0  & 1  & 0\tabularnewline
\hline 
4  & 1  & 1  & 0\tabularnewline
\hline 
5  & 0  & 0  & 1\tabularnewline
\hline 
6  & 1  & 0  & 1\tabularnewline
\hline 
7  & 0  & 1  & 1\tabularnewline
\hline 
8  & 1  & 1  & 1\tabularnewline
\hline 
\end{tabular}
\par\end{centering}
\caption{Dynamic Regimes $\boldsymbol{\delta}(\cdot)$ When $T=2$}
\label{tab:regimes}
\end{table}

\subsection{Counterfactual Welfares and Optimal Regimes}

To define welfare w.r.t. this dynamic regime, we first introduce a
counterfactual outcome as a function of a dynamic regime. Because
of the adaptivity intrinsic in dynamic regimes, expressing counterfactual
outcomes is more involved than that with static regimes $d_{t}$,
i.e., $Y_{t}(\boldsymbol{d}^{t})$ with $\boldsymbol{d}^{t}\equiv(d_{1},...,d_{t})$.
Let $\boldsymbol{Y}^{t}(\boldsymbol{d}^{t})\equiv(Y_{1}(d_{1}),Y_{2}(\boldsymbol{d}^{2}),...,Y_{t}(\boldsymbol{d}^{t}))$.
We express a counterfactual outcome with adaptive regime $\boldsymbol{\delta}^{t}(\cdot)$
as follows\footnote{As the notation suggests, we implicitly assume the ``no anticipation''
condition.}: 
\begin{align}
Y_{t}(\boldsymbol{\delta}^{t}(\cdot)) & \equiv Y_{t}(\boldsymbol{d}^{t}),\label{eq:Y(delta)}
\end{align}
where the ``bridge variables'' $\boldsymbol{d}^{t}\equiv(d_{1},...,d_{t})$
satisfy
\begin{align}
d_{1} & =\delta_{1},\nonumber \\
d_{2} & =\delta_{2}(Y_{1}(d_{1}),d_{1}),\nonumber \\
d_{3} & =\delta_{3}(\boldsymbol{Y}^{2}(\boldsymbol{d}^{2}),\boldsymbol{d}^{2}),\label{eq:a's}\\
 & \vdots\nonumber \\
d_{t} & =\delta_{t}(\boldsymbol{Y}^{t-1}(\boldsymbol{d}^{t-1}),\boldsymbol{d}^{t-1}).\nonumber 
\end{align}
Suppose $T=2$. Then, the two counterfactual outcomes are defined
as $Y_{1}(\delta_{1})=Y_{1}(d_{1})$ and $Y_{2}(\boldsymbol{\delta}^{2}(\cdot))=Y_{2}(\delta_{1},\delta_{2}(Y_{1}(\delta_{1}),\delta_{1}))$.

Let $q_{\boldsymbol{\delta}}(\boldsymbol{y})\equiv\Pr[\boldsymbol{Y}(\boldsymbol{\delta}(\cdot))=\boldsymbol{y}]$
be the joint distribution of counterfactual outcome vector $\boldsymbol{Y}(\boldsymbol{\delta}(\cdot))\equiv(Y_{1}(\delta_{1}),Y_{2}(\boldsymbol{\delta}^{2}(\cdot)),...,Y_{T}(\boldsymbol{\delta}(\cdot)))$.
We define counterfactual welfare as a linear functional of $q_{\boldsymbol{\delta}}(\boldsymbol{y})$:
\begin{align*}
W_{\boldsymbol{\delta}} & \equiv f(q_{\boldsymbol{\delta}}).
\end{align*}
Examples of the functional include the average counterfactual terminal
outcome $E[Y_{T}(\boldsymbol{\delta}(\cdot))]=\Pr[Y_{T}(\boldsymbol{\delta}(\cdot))=1]$,
our leading case and which is common in the literature, and the weighted
average of counterfactuals $\sum_{t=1}^{T}\omega_{t}E[Y_{t}(\boldsymbol{\delta}^{t}(\cdot))]$.
Then, the \textit{optimal dynamic regime} is a regime that maximizes
the welfare as defined in \eqref{eq:optDTR}:\footnote{We assume that the optimal dynamic regime is unique by simply ruling
out knife-edge cases in which two regimes deliver the same welfare.}
\begin{align*}
\boldsymbol{\delta}^{*}(\cdot) & =\arg\max_{\boldsymbol{\delta}(\cdot)\in\mathcal{D}}W_{\boldsymbol{\delta}}.
\end{align*}
In the case of $W_{\boldsymbol{\delta}}=E[Y_{T}(\boldsymbol{\delta}(\cdot))]$,
the solution $\boldsymbol{\delta}^{*}(\cdot)$ can be justified by
backward induction in finite-horizon dynamic programming. Moreover
in this case, the regime with deterministic rules $\delta_{t}(\cdot)\in\{0,1\}$
achieves the same optimal regime and optimized welfare as the regime
with stochastic rules $\delta_{t}(\cdot)\in[0,1]$; see Theorem \ref{thm:stoch_regime}
in Appendix \ref{subsec:Stochastic}.

The identification analysis of the optimal regime is closely related
to the identification of welfare for each regime and welfare gaps,
which also contain information for policy. Some interesting special
cases are the following: (i) the \textit{optimal welfare}, $W_{\boldsymbol{\delta}^{*}}$,
which in turn yields (ii) the \textit{regret} from following individual
decisions, $W_{\boldsymbol{\delta}^{*}}-W_{\boldsymbol{D}}$, where
$W_{\boldsymbol{D}}$ is simply $f(\Pr[\boldsymbol{Y}(\boldsymbol{D})=\cdot])=f(\Pr[\boldsymbol{Y}=\cdot])$,
and (iii) the \textit{gain from adaptivity}, $W_{\boldsymbol{\delta}^{*}}-W_{\boldsymbol{d}^{*}}$,
where $W_{\boldsymbol{d}^{*}}=\max_{\boldsymbol{d}}W_{\boldsymbol{d}}$
is the optimum of the welfare with a static rule, $W_{\boldsymbol{d}}=f(\Pr[\boldsymbol{Y}(\boldsymbol{d})=\cdot])$.
If the cost of treatments is not considered, the gain in (iii) is
non-negative as the set of all $\boldsymbol{d}$ is a subset of $\mathcal{D}$.

\section{Motivating Examples\label{sec:Motivating-Example:-Returns}}

For illustration, we continue discussing the example in the Introduction.
This stylized example in an observational setting is meant to motivate
the policy relevance of the optimal dynamic regime and the type of
data that are useful for recovering it. Again, consider labor market
returns to high school education and post-school training for disadvantaged
individuals. Let $D_{i1}=1$ if student $i$ has a high school diploma
and $D_{i1}=0$ otherwise; let $D_{i2}=1$ if $i$ participates in
a job training program and $D_{i2}=0$ if not. Also, let $Y_{i1}=1$
if $i$ is employed before the training program and $Y_{i1}=0$ if
not; let $Y_{i2}=1$ if $i$ is employed after the program and $Y_{i2}=0$
if not. Given the data, suppose we are interested in recovering regimes
that maximize the employment rate as welfare.

First, consider a static regime, which is a schedule $\boldsymbol{d}=(d_{1},d_{2})\in\{0,1\}^{2}$
of first assigning a high school diploma ($d_{1}\in\{0,1\}$) and
then a job training ($d_{2}\in\{0,1\}$). Define associated welfare,
which is the employment rate $W_{\boldsymbol{d}}=E[Y_{2}(\boldsymbol{d})]$.
This setup is already useful in learning, for example, $E[Y_{2}(1,0)]-E[Y_{2}(0,1)]$
or complementarity (i.e., $E[Y_{2}(0,1)]-E[Y_{2}(0,0)]$ versus $E[Y_{2}(1,1)]-E[Y_{2}(1,0)]$),
which cannot be learned from period-specific treatment effects. However,
because $d_{1}$ and $d_{2}$ are not simultaneously given but $d_{1}$
precedes $d_{2}$, the allocation $d_{2}$ can be more informed by
incorporating the knowledge about the individual's response $y_{1}$
to $d_{1}$. This motivates the dynamic regime, which is the schedule
$\boldsymbol{\delta}(\cdot)=(\delta_{1},\delta_{2}(\cdot))\in\mathcal{D}$
of allocation \textit{rules} that first assigns a high school diploma
($\delta_{1}\in\{0,1\}$) and then a job training ($\delta_{2}(y_{1},\delta_{1})\in\{0,1\}$)
depending on $\delta_{1}$ and the employment status $y_{1}$. Then,
the optimal regime with adaptivity $\boldsymbol{\delta}^{*}(\cdot)$
is the one that maximizes $W_{\boldsymbol{\delta}}=E[Y_{2}(\boldsymbol{\delta})]$.
As argued in the Introduction, $\boldsymbol{\delta}^{*}(\cdot)$ provides
policy implications that $\boldsymbol{d}^{*}$ cannot.

As $D_{1}$ and $D_{2}$ are endogenous, $\{D_{i1},Y_{i1},D_{i2},Y_{i2}\}$
above are not useful by themselves to identify $W_{\boldsymbol{\delta}}$'s
and $\boldsymbol{\delta}^{*}(\cdot)$. We employ the approach of using
IVs, either a single IV (e.g., in the initial period) or a sequence
of IVs. In experimental settings, examples of a sequence of IVs can
be found in multi-stage experiments, such as the Fast Track Prevention
Program (\citet*{conduct1992developmental}), the Elderly Program
randomized trial for the Systolic Hypertension (\citet{the1988rationale}),
and Promotion of Breastfeeding Intervention Trial (\citet{kramer2001promotion}).
It is also possible to combine multiple experiments as in \citet{johnson2019reducing}.
In observational settings, one can use IVs from quasi-experiments,
those from RD design, or a combination of them. In the example above,
we can use the distance to high schools or the number of high schools
per square mile as an instrument $Z_{1}$ for $D_{1}$. Then, a random
assignment of the job training in a field experiment can be used as
an instrument $Z_{2}$ for the compliance decision $D_{2}$. In fact,
in Section \ref{sec:Application}, we study schooling and job training
as a sequence of treatments and combine IVs from experimental and
observational data.

\section{Partial Ordering and Partial Identification\label{sec:Partial-Ordering-and}}

\subsection{Observables}

We introduce observables based on which we want to identify the optimal
regime and counterfactual welfares. Assume that the time length of
the observables is equal to $T$, the length of the optimal regime
to be identified.\footnote{In general, we may allow $\tilde{T}\ge T$ where $\tilde{T}$ is the
length of the observables.} For each period or stage $t=1,...,T$, assume that we observe the
binary instrument $Z_{t}$, the binary endogenous treatment decision
$D_{t}$, and the binary outcome $Y_{t}=\sum_{\boldsymbol{d}^{t}\in\{0,1\}^{t}}1\{\boldsymbol{D}^{t}=\boldsymbol{d}^{t}\}Y_{t}(\boldsymbol{d}^{t})$.
These variables are motivated in the previous section. As another
example, $Y_{t}$ is a symptom indicator for a patient, $D_{t}$ is
the medical treatment received, and $Z_{t}$ is generated by a multi-period
medical trial. Importantly, the framework does not preclude the case
in which $Z_{t}$ exists only for some $t$ but not all; see Section
\ref{sec:Numerical-Studies} for related discussions. In this case,
$Z_{t}$ for the other periods is understood to be degenerate. Let
$D_{t}(\boldsymbol{z}^{t})$ be the counterfactual treatment given
$\boldsymbol{z}^{t}\equiv(z_{1},...,z_{t})\in\{0,1\}^{t}$. Then,
$D_{t}=\sum_{\boldsymbol{z}^{t}\in\mathcal{Z}^{t}}D_{t}(\boldsymbol{z}^{t})$.
Let $\boldsymbol{Y}(\boldsymbol{d})\equiv(Y_{1}(d_{1}),Y_{2}(\boldsymbol{d}^{2}),...,Y_{T}(\boldsymbol{d}))$
and $\boldsymbol{D}(\boldsymbol{z})\equiv(D_{1}(z_{1}),D_{2}(\boldsymbol{z}^{2}),...,D_{T}(\boldsymbol{z}))$.

\begin{asSX}$Z_{t}\perp(\boldsymbol{Y}(\boldsymbol{d}),\boldsymbol{D}(\boldsymbol{z}))|\boldsymbol{Z}^{t-1}$.\end{asSX}

Assumption SX assumes the strict exogeneity and exclusion restriction.\footnote{There may be other covariates available for the researcher, but we
suppress them for brevity. All the stated assumptions and the analyses
of this paper can be followed conditional on the covariates. A sufficient
condition for Assumption SX is that $\boldsymbol{Z}\perp(\boldsymbol{Y}(\boldsymbol{d}),\boldsymbol{D}(\boldsymbol{z}))$.} A single IV with full independence trivially satisfies this assumption.
For a sequence of IVs, this assumption is satisfied in typical sequential
randomized experiments, as well as quasi-experiments as discussed
in Section \ref{sec:Motivating-Example:-Returns}. Let $(\boldsymbol{Y},\boldsymbol{D},\boldsymbol{Z})$
be the vector of observables $(Y_{t},D_{t},Z_{t})$ for the entire
$T$ periods and let $p$ be its distribution. We assume that $(\boldsymbol{Y}_{i},\boldsymbol{D}_{i},\boldsymbol{Z}_{i})$
is independent and identically distributed and $\{(\boldsymbol{Y}_{i},\boldsymbol{D}_{i},\boldsymbol{Z}_{i}):i=1,...,N\}$
is a small $T$ large $N$ panel. We mostly suppress the individual
unit $i$ throughout the paper. For empirical applications, the data
structure can be more general than a panel and the kinds of $Y_{t}$,
$D_{t}$ and $Z_{t}$ are allowed to be different across time; Section
\ref{sec:Motivating-Example:-Returns} contains such an example. For
the population from which the data are drawn, we are interested in
learning the optimal regime.

\subsection{Partial Ordering of Welfares}

Given the distribution $p$ of the data $(\boldsymbol{Y},\boldsymbol{D},\boldsymbol{Z})$
and under Assumption SX, we show how the optimal dynamic regime and
welfares can be partially recovered. The identified set of $\boldsymbol{\delta}^{*}(\cdot)$
will be characterized as a subset of the discrete set $\mathcal{D}$.
As the first step, we establish \textit{partial ordering} of $W_{\boldsymbol{\delta}}$
w.r.t. $\boldsymbol{\delta}(\cdot)\in\mathcal{D}$ as a function of
$p$. The partial ordering summarizes the identified signs of the
dynamic treatment effects, as will become clear later. The partial
ordering can be represented by a \textit{directed acyclic graph} (DAG).\footnote{The way directed graphs are used in this paper is completely unrelated
to causal graphical models in the literature.} The DAG representation is fruitful for introducing the notion of
the sharpness of partial ordering and later to translate it into the
identified set of $\boldsymbol{\delta}^{*}(\cdot)$.

To facilitate this analysis, we enumerate all $\left|\mathcal{D}\right|=2^{2^{T}-1}$
possible regimes. For index $k\in\mathcal{K}\equiv\{k:1\le k\le\left|\mathcal{D}\right|\}$
(and thus $\left|\mathcal{K}\right|=\left|\mathcal{D}\right|$), let
$\boldsymbol{\delta}_{k}(\cdot)$ denote the $k$-th regime in $\mathcal{D}$.
For $T=2$, Table \ref{tab:regimes} indexes all possible dynamic
regimes $\boldsymbol{\delta}(\cdot)\equiv\left(\delta_{1},\delta_{2}(\cdot)\right)$.
Let $W_{k}\equiv W_{\boldsymbol{\delta}_{k}}$ be the corresponding
welfare. Figure \ref{fig:partial_order} illustrates examples of the
partially ordered set of welfares where each edge ``$W_{k}\rightarrow W_{k'}$''
indicates the relation ``$W_{k}>W_{k'}$.''
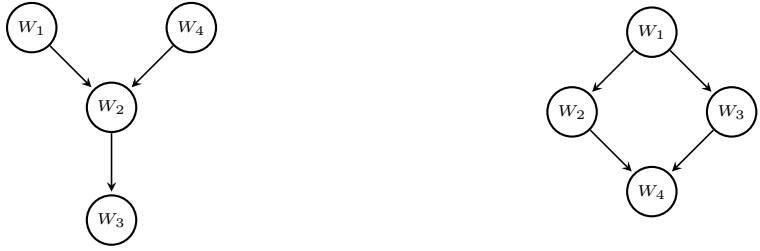
\begin{figure}
\centering

\begin{subfigure}[t]{0.35\textwidth} \centering \begin{tiny}
\begin{tikzpicture}[
> = stealth, shorten > = 1pt, auto,         
node distance = 1.5cm, semithick ]
      
\tikzstyle{every state}=[         
draw = black,         
thick,         
fill = white,         
minimum size = 3mm       
]
      
\node[state] (w2) { $W_2$ };       
\node[state] (w1) [above left of=w2] { $W_1$ };       
\node[state] (w4) [above right of=w2] { $W_4$ };       
\node[state] (w3) [below of=w2] { $W_3$ };       
      
\path[->] (w1) edge node {} (w2);       
\path[->] (w4) edge node {} (w2);       
\path[->] (w2) edge node {} (w3);     

\draw[white, dashed] (0, -2) -- (0, -2);

\end{tikzpicture} \end{tiny} \caption{$\boldsymbol{\delta}^{*}(\cdot)$ is partially identified}
\end{subfigure} ~~~~~~~~ \begin{subfigure}[t]{0.35\textwidth}
\centering \begin{tiny} \begin{tikzpicture}[
> = stealth, shorten > = 1pt, auto,         
node distance = 1.5cm, semithick ]
      
\tikzstyle{every state}=[         
draw = black,         
thick,         
fill = white,         
minimum size = 3mm       
]
      
\node[state] (w1) { $W_1$ };       
\node[state] (w2) [below left of=w1] { $W_2$ };       
\node[state] (w3) [below right of=w1] { $W_3$ };       
\node[state] (w4) [below right of=w2] { $W_4$ };       
      
\path[->] (w1) edge node {} (w2);       
\path[->] (w1) edge node {} (w3);       
\path[->] (w2) edge node {} (w4);       
\path[->] (w3) edge node {} (w4);     

\draw[white, dashed] (0, -3) -- (0, -3);

\end{tikzpicture} \end{tiny} \caption{$\boldsymbol{\delta}^{*}(\cdot)$ is point identified}
\end{subfigure}

\caption{Partially Ordered Sets of Welfares}
\label{fig:partial_order}
\end{figure}

In general, the point identification of $\boldsymbol{\delta}^{*}(\cdot)$
is achieved by establishing the total ordering of $W_{k}$, which
is not possible with instruments of limited support. Instead, we only
recover a partial ordering. We want the partial ordering to be sharp
in the sense that it cannot be improved given the data and maintained
assumptions. To formally state this, let $G(\mathcal{K},\mathcal{E})$
be a DAG where $\mathcal{K}$ is the set of welfare (or regime) indices
and $\mathcal{E}$ is the set of edges.

\begin{definition}\label{def:sharp_DAG}Given the data distribution
$p$, a partial ordering $G(\mathcal{K},\mathcal{E}_{p})$ is sharp
under the maintained assumptions if there exists no partial ordering
$G(\mathcal{K},\mathcal{E}_{p}')$ such that $\mathcal{E}_{p}'\supsetneq\mathcal{E}_{p}$
without imposing additional assumptions.

\end{definition}

Establishing sharp partial ordering amounts to determining whether
we can tightly identify the sign of a counterfactual welfare gap $W_{k}-W_{k'}$
(i.e., the dynamic treatment effects) for $k,k'\in\mathcal{K}$, and
if we can, what the sign is.

\subsection{Data-Generating Framework\label{subsec:Data-Generating-Framework}}

We introduce a simple data-generating framework and formally define
the identified set. First, we introduce latent state variables that
generate $(\boldsymbol{Y},\boldsymbol{D})$. A latent state of the
world will determine specific maps $(\boldsymbol{y}^{t-1},\boldsymbol{d}^{t})\mapsto y_{t}$
and $(\boldsymbol{y}^{t-1},\boldsymbol{d}^{t-1},\boldsymbol{z}^{t})\mapsto d_{t}$
for $t=1,...,T$ under the exclusion restriction in Assumption SX.
We introduce the latent state variable $\tilde{S}_{t}$ whose realization
represents such a state. We define $\tilde{S}_{t}$ as follows. For
given $(\boldsymbol{y}^{t-1},\boldsymbol{d}^{t},\boldsymbol{z}^{t})$,
let $Y_{t}(\boldsymbol{y}^{t-1},\boldsymbol{d}^{t})$ and $D_{t}(\boldsymbol{y}^{t-1},\boldsymbol{d}^{t-1},\boldsymbol{z}^{t})$
denote the extended counterfactual outcomes and treatments, respectively,
and let $\{Y_{t}(\boldsymbol{y}^{t-1},\boldsymbol{d}^{t})\}$ and
$\{D_{t}(\boldsymbol{y}^{t-1},\boldsymbol{d}^{t-1},\boldsymbol{z}^{t})\}$
and their sequences w.r.t. $(\boldsymbol{y}^{t-1},\boldsymbol{d}^{t},\boldsymbol{z}^{t})$.
Then, by concatenating the two sequences, define $\tilde{S}_{t}\equiv(\{Y_{t}(\boldsymbol{y}^{t-1},\boldsymbol{d}^{t})\},\{D_{t}(\boldsymbol{y}^{t-1},\boldsymbol{d}^{t-1},\boldsymbol{z}^{t})\})\in\{0,1\}^{2^{2t-1}}\times\{0,1\}^{2^{3t-2}}$.
For example, $\tilde{S}_{1}=(Y_{1}(0),Y_{1}(1),D_{1}(0),D_{1}(1))\in\{0,1\}^{2}\times\{0,1\}^{2}$,
whose realization specifies particular maps $d_{1}\mapsto y_{1}$
and $z_{1}\mapsto d_{1}$. It is convenient to transform $\tilde{\boldsymbol{S}}\equiv(\tilde{S}_{1},...,\tilde{S}_{T})$
into a scalar (discrete) latent variable in $\mathbb{N}$ as $S\equiv\beta(\tilde{\boldsymbol{S}})\in\mathcal{S}\subset\mathbb{N}$,
where $\beta(\cdot)$ is a one-to-one map that transforms a binary
sequence into a decimal value. Define 
\begin{align*}
q_{s} & \equiv\Pr[S=s],
\end{align*}
and define the vector $q$ of $q_{s}$ which represents the distribution
of $S$, namely the true data-generating process. The vector $q$
resides in $\mathcal{Q}\equiv\{q:\sum_{s}q_{s}=1\text{ and }q_{s}\ge0\text{ }\forall s\}$
of dimension $d_{q}-1$ where $d_{q}\equiv\dim(q)$. A useful fact
is that the joint distribution of counterfactuals can be written as
a linear functional of $q$: 
\begin{align}
\Pr[\boldsymbol{Y}(\boldsymbol{d})=\boldsymbol{y},\boldsymbol{D}(\boldsymbol{z})=\boldsymbol{d}] & =\Pr[S\in\mathcal{S}:\boldsymbol{Y}(\boldsymbol{y}^{T-1},\boldsymbol{d})=\boldsymbol{y},\boldsymbol{D}(\boldsymbol{y}^{T-1},\boldsymbol{d}^{T-1},\boldsymbol{z})=\boldsymbol{d}]\nonumber \\
 & =\Pr[S\in\mathcal{S}:Y_{t}(\boldsymbol{y}^{t-1},\boldsymbol{d}^{t})=y_{t},D_{t}(\boldsymbol{y}^{t-1},\boldsymbol{d}^{t-1},\boldsymbol{z}^{t})=d_{t}\quad\forall t]\nonumber \\
 & =\sum_{s\in\mathcal{S}_{\boldsymbol{y},\boldsymbol{d}|\boldsymbol{z}}}q_{s},\label{eq:dist_counterfactuals}
\end{align}
where $\mathcal{S}_{\boldsymbol{y},\boldsymbol{d}|\boldsymbol{z}}$
is constructed by using the definition of $S$; its expression can
be found in Appendix \ref{sec:Matrices}.

Based on \eqref{eq:dist_counterfactuals}, the counterfactual welfare
can be written as a linear combination of $q_{s}$'s. That is, there
exists $1\times d_{q}$ vector $A_{k}$ of $1$'s and $0$'s such
that
\begin{align}
W_{k} & =A_{k}q.\label{eq:LP_welfare}
\end{align}
The formal derivation of $A_{k}$ can be found in Appendix \ref{sec:Matrices},
but the intuition is as follows. Recall $W_{k}\equiv f(q_{\boldsymbol{\delta}_{k}})$
where $q_{\boldsymbol{\delta}}(\boldsymbol{y})\equiv\Pr[\boldsymbol{Y}(\boldsymbol{\delta}(\cdot))=\boldsymbol{y}]$.
The key observation in deriving the result \eqref{eq:LP_welfare}
is that $\Pr[\boldsymbol{Y}(\boldsymbol{\delta}(\cdot))=\boldsymbol{y}]$
can be written as a linear functional of the joint distributions of
counterfactual outcomes with a \textit{static} regime, i.e., $\Pr[\boldsymbol{Y}(\boldsymbol{d})=\boldsymbol{y}]$'s,
which in turn is a linear functional of $q$. To illustrate with $T=2$
and welfare $W_{\boldsymbol{\delta}}=E[Y_{2}(\boldsymbol{\delta}(\cdot))]$,
we have 
\begin{align*}
\Pr[Y_{2}(\boldsymbol{\delta}(\cdot))=1] & =\sum_{y_{1}\in\{0,1\}}\Pr[Y_{2}(\delta_{1},\delta_{2}(Y_{1}(\delta_{1}),\delta_{1}))=1|Y_{1}(\delta_{1})=y_{1}]\Pr[Y_{1}(\delta_{1})=y_{1}]
\end{align*}
by the law of iterated expectation. Then, for instance, Regime 4 in
Table \ref{tab:regimes} yields 
\begin{align}
\Pr[Y_{2}(\boldsymbol{\delta}_{4}(\cdot))=1] & =P[\boldsymbol{Y}(1,1)=(1,1)]+P[\boldsymbol{Y}(1,0)=(0,1)],\label{eq:ex_regime8}
\end{align}
where each $\Pr[\boldsymbol{Y}(d_{1},d_{2})=(y_{1},y_{2})]$ is the
counterfactual distribution with a static regime, which in turn is
a linear functional of \eqref{eq:dist_counterfactuals}.

The data impose restrictions on $q\in\mathcal{Q}$. Define 
\begin{align*}
p_{\boldsymbol{y},\boldsymbol{d}|\boldsymbol{z}} & \equiv p(\boldsymbol{y},\boldsymbol{d}|\boldsymbol{z})\equiv\Pr[\boldsymbol{Y}=\boldsymbol{y},\boldsymbol{D}=\boldsymbol{d}|\boldsymbol{Z}=\boldsymbol{z}],
\end{align*}
and $p$ as the vector of $p_{\boldsymbol{y},\boldsymbol{d}|\boldsymbol{z}}$'s
except redundant elements. Let $d_{p}\equiv\dim(p)$. Since $\Pr[\boldsymbol{Y}=\boldsymbol{y},\boldsymbol{D}=\boldsymbol{d}|\boldsymbol{Z}=\boldsymbol{z}]=\Pr[\boldsymbol{Y}(\boldsymbol{d})=\boldsymbol{y},\boldsymbol{D}(\boldsymbol{z})=\boldsymbol{d}]$
by Assumption SX, we can readily show by \eqref{eq:dist_counterfactuals}
that there exists $d_{p}\times d_{q}$ matrix $B$ such that 
\begin{align}
Bq & =p,\label{eq:LP_constraint}
\end{align}
where each row of $B$ is a vector of $1$'s and $0$'s; the formal
derivation of $B$ can be found in Appendix \ref{sec:Matrices}. It
is worth noting that the linearity in \eqref{eq:LP_welfare} and \eqref{eq:LP_constraint}
is \textit{not} a restriction but given by the discrete nature of
the setting. We assume $rank(B)=d_{p}$ without loss of generality,
because redundant constraints do not play a role in restricting $\mathcal{Q}$.
We focus on the non-trivial case of $d_{p}<d_{q}$. If $d_{p}\ge d_{q}$,
which is rare, we can solve for $q=(B^{\top}B)^{-1}B^{\top}p$, and
can trivially point identify $W_{k}=A_{k}q$ and thus $\boldsymbol{\delta}^{*}(\cdot)$.
Otherwise, we have a set of observationally equivalent $q$'s, which
is the source of partial identification and motivates the following
definition of the identified set.\footnote{For simplicity, we use the same notation for the true $q$ and its
observational equivalence.}

For a given $q$, let $\boldsymbol{\delta}^{*}(\cdot;q)\equiv\arg\max_{\boldsymbol{\delta}_{k}(\cdot)\in\mathcal{D}}W_{k}=A_{k}q$
be the optimal regime, explicitly written as a function of the data-generating
process. \begin{definition}Under Assumption SX, the identified set
of $\boldsymbol{\delta}^{*}(\cdot)$ given the data distribution $p$
is 
\begin{align}
\mathcal{D}_{p}^{*} & \equiv\{\boldsymbol{\delta}^{*}(\cdot;q):Bq=p\text{ and }q\in\mathcal{Q}\}\subset\mathcal{D},\label{eq:ID_set}
\end{align}
which is assumed to be empty when $Bq\neq p$.\end{definition}

\subsection{Characterizing Partial Ordering and the Identified Set\label{subsec:Establishing-Parial-Ordering}}

Given $p$, we establish the partial ordering of $W_{k}$'s by determining
whether $W_{k}>W_{k'}$, $W_{k}<W_{k'}$, or $W_{k}$ and $W_{k'}$
are not comparable (including $W_{k}=W_{k'}$), denoted as $W_{k}\sim W_{k'}$,
for $k,k'\in\mathcal{K}$. As described in the next theorem, this
procedure can be accomplished by determining the signs of the bounds
on the welfare gap $W_{k}-W_{k'}$ for $k,k'\in\mathcal{K}$ and $k>k'$.\footnote{Note that directly comparing sharp bounds on welfares themselves will
\textit{not} deliver sharp partial ordering.} Then the identified set can be characterized based on the resulting
partial ordering.

The nature of the data generation induces the linear system \eqref{eq:LP_welfare}
and \eqref{eq:LP_constraint}. This enables us to characterize the
bounds on $W_{k}-W_{k'}=(A_{k}-A_{k'})q$ as the optima in linear
programming. Let $U_{k,k'}$ and $L_{k,k'}$ be the upper and lower
bounds. Also let $\Delta_{k,k'}\equiv A_{k}-A_{k'}$ for simplicity,
and thus the welfare gap is expressed as $W_{k}-W_{k'}=\Delta_{k,k'}q$.
Then, for $k,k'\in\mathcal{K}$, we have the main linear programs:
\begin{align}
\begin{array}{c}
U_{k,k'}=\max_{q\in\mathcal{Q}}\Delta_{k,k'}q,\\
L_{k,k'}=\min_{q\in\mathcal{Q}}\Delta_{k,k'}q,
\end{array} & \qquad s.t.\quad Bq=p.\label{eq:LP}
\end{align}
\begin{asB}$\{q:Bq=p\}\cap\mathcal{Q}\neq\emptyset$.\end{asB}

Assumption B imposes that the model, Assumption SX in this case, is
correctly specified. Under misspecification, the identified set is
empty by definition. The next theorem constructs the sharp partial
ordering and characterize the identified set using $U_{k,k'}$ and
$L_{k,k'}$ for $k,k'\in\mathcal{K}$ and $k>k'$, or equivalently,
$L_{k,k'}$ for $k,k'\in\mathcal{K}$ and $k\neq k'$.\footnote{Notice that $(L_{k,k'},U_{k,k'})$ for $k>k'$ contain the same information
as $L_{k,k'}$ for $k\neq k'$, since $U_{k,k'}=-L_{k',k}$.}

\begin{theorem}\label{thm:DAG}Suppose Assumptions SX and B hold.
Then, (i) $G(\mathcal{K},\mathcal{E}_{p})$ with $\mathcal{E}_{p}\equiv\{(k,k')\in\mathcal{K}:L_{k,k'}>0\text{ and }k\neq k'\}$
is sharp; (ii) $\mathcal{D}_{p}^{*}$ defined in \eqref{eq:ID_set}
satisfies 
\begin{align}
\mathcal{D}_{p}^{*} & =\{\boldsymbol{\delta}_{k'}(\cdot):\nexists k\in\mathcal{K}\text{ such that }L_{k,k'}>0\text{ and }k\neq k'\}\label{eq:char_ID_set}\\
 & =\{\boldsymbol{\delta}_{k'}(\cdot):L_{k,k'}\le0\text{ for all }k\in\mathcal{K}\text{ and }k\neq k'\},\label{eq:char_ID_set2}
\end{align}
and therefore the sets on the right-hand side are sharp.

\end{theorem}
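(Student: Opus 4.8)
The plan is to work throughout with the feasible set $\mathcal{Q}_{p}\equiv\{q:Bq=p\}\cap\mathcal{Q}$, the identified set of data-generating processes. This is a convex, compact polytope, and it is nonempty by Assumption B. Since each welfare gap $q\mapsto\Delta_{k,k'}q$ is linear and $\mathcal{Q}_{p}$ is convex and compact, the attainable gaps $\{\Delta_{k,k'}q:q\in\mathcal{Q}_{p}\}$ form the closed interval $[L_{k,k'},U_{k,k'}]$ and both optima in \eqref{eq:LP} are attained. The single fact driving everything is the equivalence that $W_{k}>W_{k'}$ holds at \emph{every} observationally equivalent $q$ if and only if $L_{k,k'}>0$ (and $W_{k}\le W_{k'}$ at \emph{some} such $q$ iff $L_{k,k'}\le0$). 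I would record this equivalence first.

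For part (i), read $\mathcal{E}_{p}$ as exactly the set of strict dominations that the data and Assumption SX imply, so $(k,k')\in\mathcal{E}_{p}$ iff $W_{k}>W_{k'}$ for all $q\in\mathcal{Q}_{p}$, i.e.\ iff $L_{k,k'}>0$. Two checks remain. First, $G(\mathcal{K},\mathcal{E}_{p})$ is a genuine partial ordering: irreflexivity is immediate since $k\neq k'$, and acyclicity---the content of ``separate optimizations yield a common object''---follows by fixing any single $q_{0}\in\mathcal{Q}_{p}$, because every edge satisfies the strict inequality $W_{k}(q_{0})>W_{k'}(q_{0})$ simultaneously, so a directed cycle would chain these into $W_{k_{1}}(q_{0})>W_{k_{1}}(q_{0})$, a contradiction. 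Second, sharpness: any candidate edge $(k,k')\notin\mathcal{E}_{p}$ has $L_{k,k'}\le0$, so some observationally equivalent $q$ gives $W_{k}\le W_{k'}$ and refutes the asserted strict relation; hence no admissible $\mathcal{E}_{p}'\supsetneq\mathcal{E}_{p}$ exists without extra assumptions, which is Definition \ref{def:sharp_DAG}.

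For part (ii), the inclusion $\mathcal{D}_{p}^{*}\subseteq\{\boldsymbol{\delta}_{k'}:L_{k,k'}\le0\ \forall k\neq k'\}$ is the easy direction: if $\boldsymbol{\delta}_{k'}=\boldsymbol{\delta}^{*}(\cdot;q)$ for some $q\in\mathcal{Q}_{p}$ then $W_{k'}(q)\ge W_{k}(q)$ for all $k$, whence $L_{k,k'}\le\Delta_{k,k'}q\le0$, so $\boldsymbol{\delta}_{k'}$ is maximal. The reverse inclusion is where the real work lies, and it is the step I expect to be the main obstacle: given a maximal $k'$ I must produce a \emph{single} $q^{*}\in\mathcal{Q}_{p}$ at which $k'$ beats all rivals at once, i.e.\ show $\min_{q\in\mathcal{Q}_{p}}\max_{k\neq k'}\Delta_{k,k'}q\le0$. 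The difficulty is that maximality only supplies, for each rival $k$ separately, a witness $q^{(k)}$ with $\Delta_{k,k'}q^{(k)}\le0$, and these witnesses differ across $k$; naive averaging of the $q^{(k)}$ fails, and convexity alone is insufficient, since a Helly-type hypothesis is unavailable (one knows only that each $\{\Delta_{k,k'}q\le0\}\cap\mathcal{Q}_{p}$ is nonempty, not that they intersect pairwise).

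To overcome this I would pass through a minimax argument: writing $\max_{k\neq k'}\Delta_{k,k'}q=\max_{\lambda}\sum_{k}\lambda_{k}\Delta_{k,k'}q$ over the simplex in $\lambda$ and applying Sion's theorem on the convex compact sets $\mathcal{Q}_{p}$ and the $\lambda$-simplex gives $\min_{q}\max_{k}\Delta_{k,k'}q=\max_{\lambda}\min_{q}\sum_{k}\lambda_{k}\Delta_{k,k'}q$. A strictly positive value would then furnish a \emph{mixture} of rival welfares dominating $W_{k'}$ on all of $\mathcal{Q}_{p}$. The crux is to convert such a dominating mixture into a single dominating regime, contradicting maximality; this is precisely where the combinatorial structure of the vectors $A_{k}$ must enter rather than generic convexity. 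For the leading welfare $E[Y_{T}(\boldsymbol{\delta}(\cdot))]$ the structural inputs I would exploit are the additive separability of the welfare across the adaptive decision nodes together with the fact that $\mathcal{D}$ contains, for every $q$, the greedy backward-induction regime, so that a within-node average of actions can be replaced by the locally optimal action and the mixture dominated by an actual element of $\mathcal{D}$. Finally, \eqref{eq:char_ID_set} and \eqref{eq:char_ID_set2} coincide because ``$\nexists k:L_{k,k'}>0$'' is the same as ``$L_{k,k'}\le0$ for all $k\neq k'$,'' and sharpness of both sets is inherited from part (i).
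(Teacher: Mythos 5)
Your part (i) and the inclusion $\mathcal{D}_{p}^{*}\subseteq\{\boldsymbol{\delta}_{k'}(\cdot):L_{k,k'}\le0\text{ for all }k\neq k'\}$ are correct and essentially identical to the paper's argument: the paper also identifies $\mathcal{E}_{p}$ with uniform strict domination over the feasible set $\mathcal{Q}_{p}\equiv\{q:Bq=p\}\cap\mathcal{Q}$, uses convexity and attainment to equate that with $L_{k,k'}>0$, and gets the easy inclusion by contraposition. Your explicit acyclicity check via a single $q_{0}\in\mathcal{Q}_{p}$ is a nice touch that the paper leaves implicit.

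The genuine gap is the reverse inclusion, which you leave unfinished --- but you should know that the paper's own proof does not finish it either. The paper splits each rival $k$ into case (a) $U_{k,k'}\le0$, giving $A_{k}q\le A_{k'}q$ for \emph{every} $q\in\mathcal{Q}_{p}$, and case (b) $L_{k,k'}<0<U_{k,k'}$, giving only a $k$-dependent witness $q^{(k)}$ with $A_{k}q^{(k)}= A_{k'}q^{(k)}$; it then concludes in a single sentence (``Combining these implications of (a) and (b), it should be the case that $\exists q\in\mathcal{Q}_{p}$ such that, for all $k\neq k'$, $A_{k'}q\ge A_{k}q$'') that a common witness exists. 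That is exactly the step you isolate as the crux, and your skepticism is justified: the inference is invalid for generic linear functionals. For instance, if along a segment $\{q_{t}:t\in[0,1]\}$ two rivals have gaps $\Delta_{k_{1},k'}q_{t}=t-\tfrac{1}{2}$ and $\Delta_{k_{2},k'}q_{t}=\tfrac{3}{4}-t$, then each gap has a nonpositivity witness (so $k'$ is maximal with respect to both rivals), yet at every $t$ at least one gap is strictly positive, so $k'$ is optimal nowhere; no Helly-type or averaging argument can manufacture a common witness. Hence a complete proof must exploit structure the paper never invokes --- either the combinatorial structure of the $A_{k}$'s (your modularity observation: for rivals sharing the first-period action, gaps of single-node deviations add, so an everywhere-positive sum of gaps is itself the gap of an actual regime, contradicting maximality) or the structure of the fiber $\mathcal{Q}_{p}$. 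Neither your sketch nor the paper carries this out; in particular, for rivals differing in the first-period action there is no additive relation linking their welfares to $W_{k'}$, so your backward-induction conversion is not obviously available there. One further caution: your Sion/minimax step is only a reformulation, since a mixture of rivals dominating $W_{k'}$ everywhere is equivalent, via the pointwise maximum, to ``$k'$ is never optimal''; all of the content lies in converting the dominating mixture into a single dominating regime. In short, your proposal is incomplete, but it is incomplete at precisely the point where the published proof is itself an assertion rather than an argument, and your diagnosis of that point is the most valuable part of your write-up.
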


The proof of Theorem \ref{thm:DAG} is shown in the Appendix. The
key insight of the proof is that even though the bounds on the welfare
gaps are calculated from separate optimizations, the partial ordering
is governed by \textit{common} $q$'s (each of which generates all
the welfares) that are observationally equivalent; see Section \ref{subsec:Topological-Sorting}
for related discussions.

Theorem \ref{thm:DAG}(i) prescribes how to calculate the sharp partial
ordering as a function of data.\footnote{The associated DAG can be conveniently represented in terms of a $\left|\mathcal{K}\right|\times\left|\mathcal{K}\right|$
adjacency matrix $\Omega$ such that its element $\Omega_{k,k'}=1$
if $W_{k}\ge W_{k'}$ and $\Omega_{k,k'}=0$ otherwise.} According to \eqref{eq:char_ID_set} in (ii), $\mathcal{D}_{p}^{*}$
is characterized as the collection of $\boldsymbol{\delta}_{k}(\cdot)$
where $k$ is in the set of \textit{maximal elements} of the partially
ordered set $G(\mathcal{K},\mathcal{E}_{p})$, i.e., the set of regimes
that are \textit{not inferior}. In Figure \ref{fig:partial_order},
it is easy to see that the set of maximals is $\mathcal{D}_{p}^{*}=\{\boldsymbol{\delta}_{1}(\cdot),\boldsymbol{\delta}_{4}(\cdot)\}$
in panel (a) and $\mathcal{D}_{p}^{*}=\{\boldsymbol{\delta}_{1}(\cdot)\}$
in panel (b). 

The identified set $\mathcal{D}_{p}^{*}$ characterizes the information
content of the model. Given the minimal structure we impose in the
model, $\mathcal{D}_{p}^{*}$ may be large in some cases. However,
we argue that an uninformative $\mathcal{D}_{p}^{*}$ still has implications
for policy: (i) such set may recommend the policymaker eliminate sub-optimal
regimes from her options;\footnote{Section \ref{sec:Discussions:-Estimation-and} discusses how to do
this systematically after embracing sampling uncertainty.} (ii) in turn, it warns the policymaker about her lack of information
(e.g., even if she has access to the experimental data); when $\mathcal{D}_{p}^{*}=\mathcal{D}$
as one extreme, ``no recommendation'' can be given as a non-trivial
policy suggestion of the need for better data. As shown in the numerical
exercise, the size of $\mathcal{D}_{p}^{*}$ is related to the strength
of $Z_{t}$ (i.e., the size of the complier group at $t$) and the
strength of the dynamic treatment effects. This is reminiscent of
the findings in \citet{machado2018instrumental} for the average treatment
effect in a static model. In Section \ref{sec:Additional-Assumptions},
we list further identifying assumptions that help shrink $\mathcal{D}_{p}^{*}$.

\section{Set of the $n$-th Best Regimes, Topological Sorts, and Bounds on
Sorted Welfare\label{sec:Topological-Sorting}}

In this section, we propose some ways to report results of this paper
including the partial ordering. These approaches can be useful especially
when the obtained partial ordering is complicated (e.g., with a longer
horizon).

\subsection{Set of the $n$-th Best Policies}

When the partial ordering of welfare is the parameter of interest,
the identified set of $\boldsymbol{\delta}^{*}(\cdot)$ can be viewed
as a summary of the partial ordering. This view can be extended to
introduce a set of the $n$-th best regimes, which further summarizes
the partial ordering. With slight abuse of notation, we can formalize
it as follows.

Recall $\mathcal{K}$ is the set of all regime indices. Motivated
from \eqref{eq:char_ID_set}, let $\mathcal{K}_{p}^{(1)}\equiv\{k':\nexists k\in\mathcal{K}\text{ such that }L_{k,k'}>0\text{ and }k\neq k'\in\mathcal{K}\}$
be the set of maximal elements of the partial ordering and let $\mathcal{D}_{p}^{(1)}\equiv\{\boldsymbol{\delta}_{k'}(\cdot):k'\in\mathcal{K}_{p}^{(1)}\}$.
Theorem \ref{thm:DAG}(ii) can be simply stated as $\mathcal{D}_{p}^{*}=\mathcal{D}_{p}^{(1)}$.
To define the set of second-best regimes, we first remove all the
elements in $\mathcal{K}_{p}^{(1)}$ from the set of candidate. Accordingly,
by defining
\begin{align*}
\mathcal{K}_{p}^{(2)} & \equiv\{k':\nexists k\in\mathcal{K}\backslash\mathcal{K}_{p}^{(1)}\text{ such that }L_{k,k'}>0\text{ and }k\neq k'\in\mathcal{K}\backslash\mathcal{K}_{p}^{(1)}\},
\end{align*}
we can introduce the set of second-best regimes: $\mathcal{D}_{p}^{(2)}\equiv\{\boldsymbol{\delta}_{k'}(\cdot):k'\in\mathcal{K}_{p}^{(2)}\}$.
Iteratively, we can define the set of $n$-th best regimes as $\mathcal{D}_{p}^{(n)}\equiv\{\boldsymbol{\delta}_{k'}(\cdot):k'\in\mathcal{K}_{p}^{(n)}\}$
where
\begin{align*}
\mathcal{K}_{p}^{(n)} & =\left\{ k':\nexists k\in\mathcal{K}\backslash\bigcup_{j=1}^{n-1}\mathcal{K}_{p}^{(j)}\text{ such that }L_{k,k'}>0\text{ and }k\neq k'\in\mathcal{K}\backslash\bigcup_{j=1}^{n-1}\mathcal{K}_{p}^{(j)}\right\} .
\end{align*}
The sets $\mathcal{D}_{p}^{(1)},...,\mathcal{D}_{p}^{(n)}$ are useful
policy benchmarks. For instance, the policy maker can conduct a sensitivity
analysis for her chosen regime (e.g., from a parametric model) by
inspecting in which set the regime is contained.

\subsection{Topological Sorts as Observational Equivalence\label{subsec:Topological-Sorting}}

Another way to summarize the partial ordering is to use topological
sorts. A \textit{topological sort} of a partial ordering is a linear
ordering of its vertices that does not violate the order in the partial
ordering. That is, for every directed edge $k\rightarrow k'$, $k$
comes before $k'$ in this linear ordering. Apparently, there can
be multiple topological sorts for a partial ordering. Let $L_{G}$
be the number of topological sorts of partial ordering $G(\mathcal{K},\mathcal{E}_{p})$,
and let $k_{l,1}\in\mathcal{K}$ be the initial vertex of the $l$-th
topological sort for $1\le l\le L_{G}$. For example, given the partial
ordering in Figure \ref{fig:partial_order}(a), $(\boldsymbol{\delta}_{1},\boldsymbol{\delta}_{4},\boldsymbol{\delta}_{2},\boldsymbol{\delta}_{3})$
is an example of a topological sort (with $k_{l,1}=1$), but $(\boldsymbol{\delta}_{1},\boldsymbol{\delta}_{2},\boldsymbol{\delta}_{4},\boldsymbol{\delta}_{3})$
is not. Topological sorts are routinely reported for a given partial
ordering, and there are well-known algorithms that efficiently find
topological sorts, such as \citet{kahn1962topological}'s algorithm.

In fact, topological sorts can be viewed as total orderings that are
\textit{observationally equivalent} to the true \textit{total} ordering
of welfares. That is, each $q$ generates the total ordering of welfares
via $W_{k}=A_{k}q$, and $q$'s in $\{q:Bq=p\}\cap\mathcal{Q}$ generates
observationally equivalent total orderings. This insight enables us
to interpret the partial ordering we establish using the more conventional
notion of partial identification: the ordering is partially identified
in the sense that the set of all topological sorts is not a singleton.
This insight yields an alternative way of characterizing the identified
set $\mathcal{D}_{p}^{*}$ of the optimal regime.

\begin{theorem}\label{thm:topo_sort_ID_set}Suppose Assumptions SX
and B hold. The identified set $\mathcal{D}_{p}^{*}$ defined in \eqref{eq:ID_set}
satisfies 
\begin{align*}
\mathcal{D}_{p}^{*}=\{\boldsymbol{\delta}_{k_{l,1}}(\cdot):1\le l\le L_{G}\},
\end{align*}
where $k_{l,1}$ is the initial vertex of the $l$-th topological
sort of $G(\mathcal{K},\mathcal{E}_{p})$.

\end{theorem}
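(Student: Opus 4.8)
The plan is to reduce the statement to Theorem~\ref{thm:DAG}(ii) together with a standard order-theoretic fact: in a finite directed acyclic graph, the set of source vertices---equivalently, the maximal elements of the induced partial order---coincides with the set of vertices that occur as the initial vertex of some topological sort. Since Theorem~\ref{thm:DAG}(ii), in the form \eqref{eq:char_ID_set2}, already identifies $\mathcal{D}_{p}^{*}$ with the regimes indexed by the maximal elements of $G(\mathcal{K},\mathcal{E}_{p})$---namely those $k'$ with $L_{k,k'}\le0$ for all $k\neq k'$, i.e.\ those $k'$ having no incoming edge---it suffices to prove that $k'$ is such a maximal element if and only if $k'=k_{l,1}$ for some $1\le l\le L_{G}$.

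Before the equivalence I would first record that $G(\mathcal{K},\mathcal{E}_{p})$ is acyclic, so that topological sorts are well defined. This follows because an edge $(k,k')\in\mathcal{E}_{p}$ means $L_{k,k'}>0$, i.e.\ $W_{k}>W_{k'}$ under every observationally equivalent $q$ with $Bq=p$ and $q\in\mathcal{Q}$; a directed cycle would then chain these strict inequalities into $W_{k}>\cdots>W_{k}$, a contradiction. Assumption~B guarantees the feasible set $\{q:Bq=p\}\cap\mathcal{Q}$ is nonempty, so the strict inequalities are witnessed by an actual $q$ and the contradiction is genuine rather than vacuous.

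For the ``only if'' direction, suppose $k'=k_{l,1}$ is the initial vertex of the $l$-th topological sort. If $k'$ were not maximal, there would exist $k\neq k'$ with $(k,k')\in\mathcal{E}_{p}$; by the defining property of a topological sort, $k$ must precede $k'$, contradicting that $k'$ is the first vertex. Hence $k'$ is maximal, and $\boldsymbol{\delta}_{k'}(\cdot)\in\mathcal{D}_{p}^{*}$ by Theorem~\ref{thm:DAG}(ii). For the converse (constructive) direction, suppose $k'$ is maximal, so it has no incoming edge. Deleting $k'$ from $G$ leaves a finite DAG on $\mathcal{K}\setminus\{k'\}$, which therefore admits a topological sort, for instance via the algorithm of \citet{kahn1962topological}. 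Prepending $k'$ to this ordering yields a valid topological sort of $G$: every edge leaving $k'$ points to a vertex placed after it, and $k'$ has no incoming edge that could be violated. Thus $k'$ appears as $k_{l,1}$ for some $l$.

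Combining the two directions shows that the set of initial vertices equals the set of maximal elements of $G(\mathcal{K},\mathcal{E}_{p})$, and substituting into Theorem~\ref{thm:DAG}(ii) gives $\mathcal{D}_{p}^{*}=\{\boldsymbol{\delta}_{k_{l,1}}(\cdot):1\le l\le L_{G}\}$. I expect no serious obstacle: the only points demanding care are keeping the edge-orientation convention straight (an edge points from the larger to the smaller welfare, so maximal elements are precisely the sources) and confirming acyclicity so that topological sorts exist---both immediate from the definition of $\mathcal{E}_{p}$ and from Assumption~B.
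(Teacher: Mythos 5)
Your proof is correct, and it takes a genuinely different route from the paper's. The paper does not argue from Theorem~\ref{thm:DAG}(ii) at all: it first establishes an intermediate result (Theorem~\ref{thm:ID_set} in the appendix) characterizing $\mathcal{D}_{p}^{*}$ as the set of initial vertices of the \emph{directed paths} of $G(\mathcal{K},\mathcal{E}_{p})$, and then proves the topological-sort statement by showing the initial vertices of directed paths coincide with the initial vertices of topological sorts, via a two-inclusion contradiction argument with case analysis (singleton sorts, comparable versus incomparable welfares). You instead reduce directly to the maximal-element characterization \eqref{eq:char_ID_set2}, which is already in the main text, and then invoke the standard order-theoretic fact that in a finite DAG the sources are exactly the vertices that can head a topological sort---proving both directions cleanly, with the converse handled constructively by deleting the source and prepending it to a sort of the remainder. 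Your route is shorter and more robust: it avoids the intermediate directed-path object entirely, and it makes explicit two points the paper leaves implicit, namely that $G(\mathcal{K},\mathcal{E}_{p})$ is acyclic (so topological sorts exist at all) and that Assumption~B is what makes the acyclicity argument non-vacuous, since with an empty feasible set the strict inequalities $W_{k}>W_{k'}$ would have no witnessing $q$. What the paper's approach buys in exchange is the directed-path characterization itself, which has independent interest as another way to read the identified set off the DAG; but as a proof of this particular theorem, your argument is the more economical one.
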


Suppose the partial ordering we recover from the data is not too
sparse. By definition, a topological sort provides a ranking of regimes
that is \textit{not inconsistent} with the partial welfare ordering.
Therefore, not only $\boldsymbol{\delta}_{k_{l,1}}(\cdot)\in\mathcal{D}_{p}^{*}$
but also the full sequence of a topological sort
\begin{align}
\left(\boldsymbol{\delta}_{k_{l,1}}(\cdot),\boldsymbol{\delta}_{k_{l,2}}(\cdot),...,\boldsymbol{d}_{k_{l,\left|\mathcal{D}\right|}}(\cdot)\right)\label{eq:top_sort}
\end{align}
can be useful. A policymaker can be equipped with any of such sequences
as a policy benchmark.

\subsection{Bounds on Sorted Welfares\label{subsec:Bounds-on-Sorted}}

The set of $n$-th best regimes and topological sorts provide ordinal
information about counterfactual welfares. To gain more comprehensive
knowledge about the welfares, they can be accompanied by cardinal
information: bounds on the sorted welfares. One might especially be
interested in the bounds on ``top-tier'' welfares that are associated
with the identified set or the first few elements in the topological
sort. Bounds on gains from adaptivity and regrets can also be computed.
These bounds can be calculated by solving linear programs. For instance,
the sharp lower and upper bounds on welfare $W_{k}$ can be calculated
via 
\begin{align}
\begin{array}{c}
U_{k}=\max_{q\in\mathcal{Q}}A_{k}q,\\
L_{k}=\min_{q\in\mathcal{Q}}A_{k}q,
\end{array} & \qquad s.t.\quad Bq=p.\label{eq:LP-1}
\end{align}

\section{Additional Assumptions\label{sec:Additional-Assumptions}}

Often, researchers are willing to impose more assumptions based on
priors about the data-generating process, e.g., agent's behaviors.
Examples are uniformity, agent's learning, Markovian structure, and
stationarity. These assumptions are easy to incorporate within the
linear programming \eqref{eq:LP}. These assumptions tighten the
identified set $\mathcal{D}_{p}^{*}$ by reducing the dimension of
simplex $\mathcal{Q}$, and thus producing a denser partial ordering.\footnote{Similarly, when these assumptions are incorporated in \eqref{eq:LP-1},
we obtain tighter bounds on welfares.}

To incorporate these assumptions, we extend the framework introduced
in Sections \ref{sec:Partial-Ordering-and}--\ref{sec:Topological-Sorting}.
Suppose $h$ is a $d_{q}\times1$ vector of ones and zeros, where
zeros are imposed by given identifying assumptions. Introduce $d_{q}\times d_{q}$
diagonal matrix $H=diag(h)$. Then, we can define a standard simplex
for $\bar{q}\equiv Hq$ as 
\begin{align}
\bar{\mathcal{Q}} & \equiv\{\bar{q}:\sum_{s}\bar{q}_{s}=1\text{ and }\bar{q}_{s}\ge0\text{ }\forall s\}.\label{eq:simplex2}
\end{align}
Note that the dimension of this simplex is smaller than the dimension
$d_{q}$ of $\mathcal{Q}$ if $h$ contains zeros. Then we can modify
\eqref{eq:LP_welfare} and \eqref{eq:LP_constraint} as 
\begin{align*}
B\bar{q} & =p,\\
W_{k} & =A_{k}\bar{q},
\end{align*}
respectively. Let $\boldsymbol{\delta}^{*}(\cdot;\bar{q})\equiv\arg\max_{\boldsymbol{\delta}_{k}(\cdot)\in\mathcal{D}}W_{k}=A_{k}\bar{q}$.
Then, the identified set with the identifying assumptions coded in
$h$ is defined as 
\begin{align}
\bar{\mathcal{D}}_{p}^{*} & \equiv\{\boldsymbol{\delta}^{*}(\cdot;\bar{q}):B\bar{q}=p\text{ and }\bar{q}\in\mathcal{Q}\}\subset\mathcal{D},\label{eq:ID_set-1}
\end{align}
which is assumed to be empty when $B\bar{q}\neq p$. Importantly,
the latter occurs when any of the identifying assumptions are misspecified.
Note that $H$ is idempotent. Define $\bar{\Delta}\equiv\Delta H$
and $\bar{B}\equiv BH$. Then $\Delta\bar{q}=\bar{\Delta}\bar{q}$
and $B\bar{q}=\bar{B}\bar{q}$. Therefore, to generate the partial
ordering and characterize the identified set, Theorem \ref{thm:DAG}
can be modified by replacing $q$, $B$ and $\Delta$ with $\bar{q}$,
$\bar{B}$ and $\bar{\Delta}$, respectively.

We now list examples of identifying assumptions. This list is far
from complete, and there may be other assumptions on how $(\boldsymbol{Y},\boldsymbol{D},\boldsymbol{Z})$
are generated. The first assumption is a sequential version of the
uniformity assumption (i.e., the monotonicity assumption) in \citet{imbens1994identification}.\begin{asM1}For
each $t$, either $D_{t}(\boldsymbol{Z}^{t-1},1)\ge D_{t}(\boldsymbol{Z}^{t-1},0)$
w.p.1 or $D_{t}(\boldsymbol{Z}^{t-1},1)\le D_{t}(\boldsymbol{Z}^{t-1},0)$
w.p.1. conditional on $(\boldsymbol{Y}^{t-1},\boldsymbol{D}^{t-1},\boldsymbol{Z}^{t-1})$.\end{asM1}

Assumption M1 postulates that there is no defying (or complying) behavior
in decision $D_{t}$ conditional on $(\boldsymbol{Y}^{t-1},\boldsymbol{D}^{t-1},\boldsymbol{Z}^{t-1})$.
Without being conditional on $(\boldsymbol{Y}^{t-1},\boldsymbol{D}^{t-1},\boldsymbol{Z}^{t-1})$,
however, there can be a general non-monotonic pattern in the way that
$\boldsymbol{Z}^{t}$ influences $\boldsymbol{D}^{t}$. For example,
we can have $D_{t}(\boldsymbol{Z}^{t-1},1)\ge D_{t}(\boldsymbol{Z}^{t-1},0)$
for $D_{t-1}=1$ while $D_{t}(\boldsymbol{Z}^{t-1},1)\le D_{t}(\boldsymbol{Z}^{t-1},0)$
for $D_{t-1}=0$. Recall $\tilde{S}_{t}\equiv(\{Y_{t}(\boldsymbol{y}^{t-1},\boldsymbol{d}^{t})\},\{D_{t}(\boldsymbol{y}^{t-1},\boldsymbol{d}^{t-1},\boldsymbol{z}^{t})\})\in\{0,1\}^{2^{2t-1}}\times\{0,1\}^{2^{3t-2}}$.
For example, the no-defier assumption can be incorporated in $h$
by having $h_{s}=0$ for $s\in\{S=\beta(\tilde{\boldsymbol{S}}):D_{t}(\boldsymbol{y}^{t-1},\boldsymbol{d}^{t-1},\boldsymbol{z}^{t-1},1)=0\text{ and }D_{t}(\boldsymbol{y}^{t-1},\boldsymbol{d}^{t-1},\boldsymbol{z}^{t-1},0)=1\text{ }\forall t\}$
and $h_{s}=1$ otherwise. By extending the idea of \citet{vytlacil2002independence},
we can show that M1 is the equivalent of imposing a threshold-crossing
model for $D_{t}$: 
\begin{align}
D_{t} & =1\{\pi_{t}(\boldsymbol{Y}^{t-1},\boldsymbol{D}^{t-1},\boldsymbol{Z}^{t})\ge\nu_{t}\},\label{eq:model2_only}
\end{align}
where $\pi_{t}(\cdot)$ is an unknown, measurable, and non-trivial
function of $Z_{t}$.

\begin{lemma}\label{lem:vyt_1}Suppose Assumption SX holds and $\Pr[D_{t}=1|\boldsymbol{Y}^{t-1},\boldsymbol{D}^{t-1},\boldsymbol{Z}^{t}]$
is a nontrivial function of $Z_{t}$. Assumption M1 is equivalent
to \eqref{eq:model2_only} being satisfied conditional on $(\boldsymbol{Y}^{t-1},\boldsymbol{D}^{t-1},\boldsymbol{Z}^{t-1})$
for each $t$.

\end{lemma}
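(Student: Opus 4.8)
The plan is to adapt the equivalence between monotonicity and a latent-index selection model established by \citet{vytlacil2002independence} to the present dynamic, conditional setting. Since the asserted equivalence is period-by-period and conditional on the history $(\boldsymbol{Y}^{t-1},\boldsymbol{D}^{t-1},\boldsymbol{Z}^{t-1})$, I would fix $t$ and a realized history and treat $Z_t$ as a single binary instrument acting on the scalar choice $D_t$; the whole argument then reduces to the static binary-IV case applied conditionally, and the full dynamic statement follows by letting the history vary over its support. Because $Z_t$ is binary, the construction of the latent threshold can be made fully explicit, rather than invoking the general continuous-instrument version of Vytlacil's theorem.

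For the direction \eqref{eq:model2_only} $\Rightarrow$ M1, fix the history. The two counterfactual treatments $D_t(\boldsymbol{Z}^{t-1},1)$ and $D_t(\boldsymbol{Z}^{t-1},0)$ are generated from the same scalar $\nu_t$, since only the last coordinate of the index is switched between $z_t=1$ and $z_t=0$. Conditional on the history, $\pi_t(\cdot,1)$ and $\pi_t(\cdot,0)$ are fixed numbers, so $D_t(\cdot,1)=1\{\pi_t(\cdot,1)\ge\nu_t\}$ and $D_t(\cdot,0)=1\{\pi_t(\cdot,0)\ge\nu_t\}$ are ordered according to whichever of $\pi_t(\cdot,1),\pi_t(\cdot,0)$ is larger, uniformly in $\nu_t$. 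Hence conditional on the history one of the two inequalities in M1 holds w.p.1, with the direction pinned down by the sign of $\pi_t(\cdot,1)-\pi_t(\cdot,0)$; this is exactly M1.

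For the converse M1 $\Rightarrow$ \eqref{eq:model2_only}, fix the history and take the no-defier orientation $D_t(\boldsymbol{Z}^{t-1},1)\ge D_t(\boldsymbol{Z}^{t-1},0)$ w.p.1 (the reverse orientation is symmetric). This partitions the conditional population into always-takers, never-takers, and compliers, with defiers of zero mass. Writing $P_z\equiv\Pr[D_t=1\mid\boldsymbol{Y}^{t-1},\boldsymbol{D}^{t-1},\boldsymbol{Z}^{t-1},Z_t=z]$, Assumption SX makes the latent type independent of $Z_t$ given the history, so $P_0$ equals the mass of always-takers and $P_1$ that of always-takers plus compliers, with $P_1\ge P_0$ and, under the lemma's nontriviality hypothesis, $P_1>P_0$. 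I would then set $\pi_t(\cdot,z)\equiv P_z$ and construct $\nu_t$ by spreading mass uniformly within types, assigning always-takers to $[0,P_0]$, compliers to $(P_0,P_1]$, and never-takers to $(P_1,1]$ through an auxiliary randomization independent of everything else, so that $\nu_t\sim U[0,1]$ and, crucially, $\nu_t\perp Z_t$ conditional on the history. A direct check then shows $D_t(\cdot,z)=1\{P_z\ge\nu_t\}$ recovers each type's counterfactual, which is \eqref{eq:model2_only} with $\pi_t$ a nontrivial function of $Z_t$.

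The main obstacle is this last step of the converse: producing a single scalar $\nu_t$, independent of $Z_t$ conditional on the history, that simultaneously rationalizes both counterfactual choices $D_t(\cdot,0)$ and $D_t(\cdot,1)$. Independence is what makes \eqref{eq:model2_only} a genuine selection model, and it is delivered precisely by Assumption SX, which guarantees that the conditional type composition, hence the masses $P_0$ and $P_1-P_0$, does not depend on $Z_t$; verifying that the within-type uniform randomization respects this independence while still reproducing the observed pair $(P_0,P_1)$ is the crux. The remaining care is bookkeeping: checking that the orientation of the inequality in M1 may legitimately depend on the realized history, and that the period-by-period construction composes into the for-all-$t$ statement.
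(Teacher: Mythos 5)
Your proof is correct, and the forward direction (\eqref{eq:model2_only} $\Rightarrow$ M1) coincides with the paper's: fix the history, note that both counterfactuals share the same $\nu_{t}$, and order the indicators by the sign of $\pi_{t}(\cdot,1)-\pi_{t}(\cdot,0)$. Where you genuinely depart from the paper is the converse. The paper does not construct anything: it derives the conditional independence $Y_{t}(\boldsymbol{d}^{t}),D_{t}(\boldsymbol{z}^{t})\perp\boldsymbol{Z}^{t}\mid(\boldsymbol{Y}^{t-1},\boldsymbol{D}^{t-1},\boldsymbol{Z}^{t-1})$ from Assumption SX, checks that this together with M1 matches Assumptions S-1, L-1, and L-2 of \citet{vytlacil2002independence} conditional on the history, and then invokes his Theorem 1 as a black box. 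You instead re-prove the equivalence from scratch in the binary-instrument case: partition the conditional population into always-takers, compliers, and never-takers, set $\pi_{t}(\cdot,z)=P_{z}$ (the conditional propensity score), and build $\nu_{t}\sim U[0,1]$ by spreading types over $[0,P_{0}]$, $(P_{0},P_{1}]$, $(P_{1},1]$ with auxiliary within-type randomization, using SX to guarantee that the type composition---hence $\nu_{t}$---is independent of $Z_{t}$ given the history. This is essentially the constructive argument underlying Vytlacil's theorem, specialized to a binary instrument. The paper's citation route is shorter and inherits whatever generality Vytlacil's conditions allow; your route is self-contained, makes the index ($\pi_{t}$ equal to the propensity score) and the normalization $\nu_{t}\sim U[0,1]$ fully explicit, and exposes exactly where SX is used (invariance of the type masses $P_{0}$ and $P_{1}-P_{0}$ to $Z_{t}$), which the black-box citation hides. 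Your closing bookkeeping points---that the orientation of M1 may vary with the history, and that the conditional constructions paste together across histories and across $t$---are the right caveats and are consistent with the lemma's statement, which only asserts the model conditional on $(\boldsymbol{Y}^{t-1},\boldsymbol{D}^{t-1},\boldsymbol{Z}^{t-1})$.
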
The dynamic selection model \eqref{eq:model2_only} should
not be confused with the dynamic regime \eqref{eq:trt_rule}. Compared
to the dynamic regime $d_{t}=\delta_{t}(\boldsymbol{y}^{t-1},\boldsymbol{d}^{t-1})$,
which is a hypothetical quantity, equation \eqref{eq:model2_only}
models each individual's observed treatment decision, in that it is
not only a function of $(\boldsymbol{Y}^{t-1},\boldsymbol{D}^{t-1})$
but also $\nu_{t}$, the individual's unobserved characteristics.
We assume that the policymaker has no access to $\boldsymbol{\nu}\equiv(\nu_{1},...,\nu_{T})$.
The functional dependence of $D_{t}$ on  ($\boldsymbol{Y}^{t-1},\boldsymbol{D}^{t-1}$)
and $\boldsymbol{Z}^{t-1}$ reflects the agent's learning. Indeed,
a specific version of such learning can be imposed as an additional
identifying assumption:

\begin{asL}For each $t$ and given $\boldsymbol{z}^{t}$, $D_{t}(\boldsymbol{y}^{t-1},\boldsymbol{d}^{t-1},\boldsymbol{z}^{t})\ge D_{t}(\tilde{\boldsymbol{y}}^{t-1},\tilde{\boldsymbol{d}}^{t-1},\boldsymbol{z}^{t})$
$\text{w.p.1}$ for $(\boldsymbol{y}^{t-1},\boldsymbol{d}^{t-1})$
and $(\tilde{\boldsymbol{y}}^{t-1},\tilde{\boldsymbol{d}}^{t-1})$
such that $\left\Vert \boldsymbol{y}^{t-1}-\boldsymbol{d}^{t-1}\right\Vert <\left\Vert \tilde{\boldsymbol{y}}^{t-1}-\tilde{\boldsymbol{d}}^{t-1}\right\Vert $
(long memory) or $y_{t-1}-d_{t-1}<\tilde{y}_{t-1}-\tilde{d}_{t-1}$
(short memory).\end{asL}

According to Assumption L, an agent has the ability to revise her
next period's decision based on her memory. To illustrate, consider
the second period decision, $D_{2}(y_{1},d_{1})$. Under Assumption
L, an agent who would switch her treatment decision at $t=2$ had
she experienced bad health ($y_{1}=0$) after receiving the treatment
($d_{1}=1$), i.e., $D_{2}(0,1)=0$, would remain to take the treatment
had she experienced good health, i.e., $D_{2}(1,1)=1$. Moreover,
if an agent has not switched even after bad health, i.e., $D_{2}(0,1)=1$,
it should be because of her unobserved preference, and thus $D_{2}(1,1)=1$,
not because she cannot learn from the past, i.e., $D_{2}(1,1)=0$
cannot happen.\footnote{As suggested in this example, Assumption L makes the most sense when
$Y_{t}$ and $D_{t}$ are the same (or at least similar) types over
time, which is not generally required for the analysis of this paper.}

Sometimes, we want to further impose uniformity in the formation of
$Y_{t}$ on top of Assumption M1:

\begin{asM2}Assumption M1 holds, and for each $t$, either $Y_{t}(\boldsymbol{D}^{t-1},1)\ge Y_{t}(\boldsymbol{D}^{t-1},0)$
w.p.1 or $Y_{t}(\boldsymbol{D}^{t-1},1)\le Y_{t}(\boldsymbol{D}^{t-1},0)$
w.p.1 conditional on $(\boldsymbol{Y}^{t-1},\boldsymbol{D}^{t-1})$.\end{asM2}

This assumption postulates uniformity in a way that restricts heterogeneity
of the contemporaneous treatment effect. However, similarly as before,
without being conditional on $(\boldsymbol{Y}^{t-1},\boldsymbol{D}^{t-1})$,
there can be a general non-monotonic pattern in the way that $\boldsymbol{D}^{t}$
influences $\boldsymbol{Y}^{t}$. For example, we can have $Y_{t}(\boldsymbol{D}^{t-1},1)\ge Y_{t}(\boldsymbol{D}^{t-1},0)$
for $Y_{t-1}=1$ while $Y_{t}(\boldsymbol{D}^{t-1},1)\le Y_{t}(\boldsymbol{D}^{t-1},0)$
for $Y_{t-1}=0$. It is also worth noting that Assumption M2 (and
M1) does not assume the direction of monotonicity, but the direction
is recovered from the data. This is in contrast to the monotone treatment
response assumption in, e.g., \citet{manski1997monotone} and \citet{manski2000monotone},
which assume the direction. Using a similar argument as before, Assumption
M2 is the equivalent of a dynamic version of a nonparametric triangular
model: 
\begin{align}
Y_{t} & =1\{\mu_{t}(\boldsymbol{Y}^{t-1},\boldsymbol{D}^{t})\ge\varepsilon_{t}\},\label{eq:model1}\\
D_{t} & =1\{\pi_{t}(\boldsymbol{Y}^{t-1},\boldsymbol{D}^{t-1},\boldsymbol{Z}^{t})\ge\nu_{t}\},\label{eq:model2}
\end{align}
where $\mu_{t}(\cdot)$ and $\pi_{t}(\cdot)$ are unknown, measurable,
and non-trivial functions of $D_{t}$ and $Z_{t}$, respectively.

\begin{lemma}\label{lem:vyt_2}Suppose Assumption SX holds, $\Pr[D_{t}=1|\boldsymbol{Y}^{t-1},\boldsymbol{D}^{t-1},\boldsymbol{Z}^{t}]$
is a non-trivial function of $Z_{t}$, and $\Pr[Y_{t}=1|\boldsymbol{Y}^{t-1},\boldsymbol{D}^{t}]$
is a non-trivial function of $D_{t}$. Assumption M2 is equivalent
to \eqref{eq:model1}--\eqref{eq:model2} being satisfied conditional
on $(\boldsymbol{Y}^{t-1},\boldsymbol{D}^{t-1},\boldsymbol{Z}^{t-1})$
for each $t$.

\end{lemma}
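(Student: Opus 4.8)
The plan is to prove the two directions of the equivalence by decomposing each side into an outcome part and a treatment part, and to reduce each part to a conditional Vytlacil-type argument. Since Assumption M2 is exactly Assumption M1 augmented with the uniformity of $Y_t$ in $D_t$, and the model is the pair consisting of the outcome equation \eqref{eq:model1} and the treatment equation \eqref{eq:model2}, the treatment layer is already settled: conditional on $(\boldsymbol{Y}^{t-1},\boldsymbol{D}^{t-1},\boldsymbol{Z}^{t-1})$, Lemma \ref{lem:vyt_1} establishes that the $Z_t$-uniformity in M1 is equivalent to \eqref{eq:model2}, using the hypothesis that $\Pr[D_t=1|\boldsymbol{Y}^{t-1},\boldsymbol{D}^{t-1},\boldsymbol{Z}^t]$ is a nontrivial function of $Z_t$. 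Hence the entire content to be added is the equivalence, conditional on $(\boldsymbol{Y}^{t-1},\boldsymbol{D}^{t-1})$, between the outcome-uniformity clause of M2 and the threshold-crossing representation \eqref{eq:model1}. Here $D_t$ plays for $Y_t$ the role that $Z_t$ plays for $D_t$ in Lemma \ref{lem:vyt_1}, $\varepsilon_t$ plays the role of $\nu_t$, and the non-triviality hypothesis that $\Pr[Y_t=1|\boldsymbol{Y}^{t-1},\boldsymbol{D}^t]$ depends on $D_t$ is what forces $\mu_t$ to be a genuine function of $D_t$.

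For the direction \eqref{eq:model1}--\eqref{eq:model2}$\Rightarrow$M2, I would fix the history $(\boldsymbol{Y}^{t-1},\boldsymbol{D}^{t-1})$ and observe that, because $\varepsilon_t$ carries no $D_t$ index, the counterfactual outcomes are $Y_t(\boldsymbol{D}^{t-1},d_t)=1\{\mu_t(\boldsymbol{Y}^{t-1},\boldsymbol{D}^{t-1},d_t)\ge\varepsilon_t\}$ for $d_t\in\{0,1\}$. Evaluating $\mu_t$ at its two arguments yields two real numbers whose order is shared by all individuals with that history; whichever argument produces the larger index fixes a single direction of the inequality between $Y_t(\boldsymbol{D}^{t-1},1)$ and $Y_t(\boldsymbol{D}^{t-1},0)$ that holds w.p.1. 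This is precisely the outcome-uniformity clause of M2, and together with Lemma \ref{lem:vyt_1} applied to \eqref{eq:model2} it yields all of M2.

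For the converse M2$\Rightarrow$\eqref{eq:model1}--\eqref{eq:model2}, the treatment equation is delivered directly by Lemma \ref{lem:vyt_1}, so I need only construct $(\mu_t,\varepsilon_t)$ for the outcome equation, replicating the construction of \citet{vytlacil2002independence} conditional on $(\boldsymbol{Y}^{t-1},\boldsymbol{D}^{t-1})$. Without loss of generality take the direction $Y_t(\boldsymbol{D}^{t-1},1)\ge Y_t(\boldsymbol{D}^{t-1},0)$ w.p.1, the reverse case being symmetric. I would let $\varepsilon_t$ be the individual's latent rank normalized to be uniform on $[0,1]$ and define $\mu_t(\boldsymbol{Y}^{t-1},\boldsymbol{D}^{t-1},d_t)$ through the conditional response probability $\Pr[Y_t=1|\boldsymbol{Y}^{t-1},\boldsymbol{D}^{t-1},D_t=d_t]$, so that the event $\{\mu_t\ge\varepsilon_t\}$ reproduces $Y_t(\boldsymbol{D}^{t-1},d_t)$ for both values of $d_t$ at once. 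The uniform ordering is exactly what guarantees that a single $\varepsilon_t$, not indexed by $d_t$, can rationalize both counterfactuals simultaneously, and the non-triviality hypothesis ensures $\mu_t$ genuinely varies with $D_t$.

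The main obstacle is consistency across the two layers of latent variables in the dynamic setting. I must check that the constructed $(\varepsilon_t,\nu_t)$ coexist with Assumption SX and with the recursively built latents of earlier periods, i.e., that conditioning on the realized history $(\boldsymbol{Y}^{t-1},\boldsymbol{D}^{t-1},\boldsymbol{Z}^{t-1})$ does not upset the exclusion restriction underlying the threshold representation. A secondary subtlety is that $D_t$ is endogenous rather than an exogenous instrument; I would emphasize that \eqref{eq:model1} is a purely functional statement about the counterfactual map $d_t\mapsto Y_t(\boldsymbol{D}^{t-1},d_t)$ for a fixed $\varepsilon_t$, so the endogeneity of $D_t$ enters only through the joint law of $(\varepsilon_t,\nu_t)$ and not through the form of \eqref{eq:model1} itself. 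The outcome-uniformity clause of M2 is then exactly the period-by-period condition that makes this functional representation possible.
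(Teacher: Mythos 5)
Your overall architecture mirrors the paper's proof exactly: peel off the treatment layer via Lemma \ref{lem:vyt_1}, then prove that, conditional on $(\boldsymbol{Y}^{t-1},\boldsymbol{D}^{t-1})$, the outcome-uniformity clause of M2 is equivalent to \eqref{eq:model1}, with $(Y_{t},D_{t})$ playing the roles that $(D_{t},Z_{t})$ play in Lemma \ref{lem:vyt_1}. Your forward direction (threshold model $\Rightarrow$ uniformity) is correct. The gap is in your converse construction: you define $\mu_{t}(\boldsymbol{Y}^{t-1},\boldsymbol{D}^{t-1},d_{t})$ through the \emph{observed} conditional response probability $\Pr[Y_{t}=1|\boldsymbol{Y}^{t-1},\boldsymbol{D}^{t-1},D_{t}=d_{t}]$ while simultaneously taking $\varepsilon_{t}$ to be a latent rank that is uniform given the history. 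These two choices are mutually inconsistent precisely because $D_{t}$ is endogenous. If $\varepsilon_{t}$ is uniform given the history and $Y_{t}(\boldsymbol{D}^{t-1},d_{t})=1\{\mu_{t}(\cdot,d_{t})\ge\varepsilon_{t}\}$ a.s., then necessarily $\mu_{t}(\cdot,d_{t})=\Pr[Y_{t}(\boldsymbol{D}^{t-1},d_{t})=1|\boldsymbol{Y}^{t-1},\boldsymbol{D}^{t-1}]$, the \emph{counterfactual} response probability, which under selection on unobservables differs from the observed one. Worse, selection can reverse the ordering: suppose, within a history cell, exactly those with $Y_{t}(\boldsymbol{D}^{t-1},0)=1$ choose $D_{t}=0$, while among those choosing $D_{t}=1$ (all of whom have $Y_{t}(\boldsymbol{D}^{t-1},0)=0$) only half have $Y_{t}(\boldsymbol{D}^{t-1},1)=1$. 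Then uniformity $Y_{t}(\boldsymbol{D}^{t-1},1)\ge Y_{t}(\boldsymbol{D}^{t-1},0)$ holds w.p.1, yet your $\mu_{t}(\cdot,0)=1>\frac{1}{2}=\mu_{t}(\cdot,1)$. With that ordering, any random variable $\varepsilon_{t}$ satisfies $\{\varepsilon_{t}\le\mu_{t}(\cdot,1)\}\subseteq\{\varepsilon_{t}\le\mu_{t}(\cdot,0)\}$, whereas the counterfactuals require the reverse strict inclusion $\{Y_{t}(\boldsymbol{D}^{t-1},0)=1\}\subsetneq\{Y_{t}(\boldsymbol{D}^{t-1},1)=1\}$; so no common $\varepsilon_{t}$, however distributed, makes your $\mu_{t}$ reproduce both counterfactuals, and the key step of your construction fails.

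The fix is small and restores exactly the argument the paper intends by its ``replace the roles'' remark: within each history cell set $\mu_{t}(\cdot,d_{t})\equiv\Pr[Y_{t}(\boldsymbol{D}^{t-1},d_{t})=1|\boldsymbol{Y}^{t-1},\boldsymbol{D}^{t-1}]$ and let $\varepsilon_{t}$ be the uniform rank built on the nested events $\{Y_{t}(\boldsymbol{D}^{t-1},1)=Y_{t}(\boldsymbol{D}^{t-1},0)=1\}\subseteq\{Y_{t}(\boldsymbol{D}^{t-1},1)=1\}$ (nestedness is exactly what the uniformity clause of M2 delivers); then a single $\varepsilon_{t}$ rationalizes both counterfactuals, and the history cells can be glued since they are finitely many. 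The reason your version slipped is instructive: in \citet{vytlacil2002independence} the shifter is an independent instrument, so the observed propensity $\Pr[D=1|Z=z]$ coincides with $\Pr[D(z)=1]$ and the distinction is invisible; here the shifter $D_{t}$ is endogenous, so the distinction is the whole point. Indeed, your own closing remark---that \eqref{eq:model1} is a purely functional statement and endogeneity enters only through the joint law of $(\varepsilon_{t},\nu_{t})$---is correct, but it requires the index $\mu_{t}$ to be built from counterfactual, not observed, probabilities; as written, your construction quietly imposes unconfoundedness of $D_{t}$ given the history, which is exactly what the paper's framework refuses to assume.
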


The next assumption imposes a Markov-type structure in the $Y_{t}$
and $D_{t}$ processes.

\begin{asK}$Y_{t}|(\boldsymbol{Y}^{t-1},\boldsymbol{D}^{t})\stackrel{d}{=}Y_{t}|(Y_{t-1},D_{t})$
and $D_{t}|(\boldsymbol{Y}^{t-1},\boldsymbol{D}^{t-1},\boldsymbol{Z}^{t})\stackrel{d}{=}D_{t}|(Y_{t-1},D_{t-1},Z_{t})$
for each $t$.\end{asK}

In terms of the triangular model \eqref{eq:model1}--\eqref{eq:model2},
Assumption K implies 
\begin{align*}
Y_{t} & =1\{\mu_{t}(Y_{t-1},D_{t})\ge\varepsilon_{t}\},\\
D_{t} & =1\{\pi_{t}(Y_{t-1},D_{t-1},Z_{t})\ge\nu_{t}\},
\end{align*}
which yields the familiar structure of dynamic discrete choice models
found in the literature. Lastly, when there are more than two periods,
an assumption that imposes stationarity can be helpful for identification.
Such an assumption can be found in \citet{torgovitsky2016partial}.

\section{Cardinality Reduction\label{sec:Cardinality-Reduction}}

The typical time horizons we consider in this paper are short. For
example, a multi-stage experiment called the Fast Track Prevention
Program (\citet*{conduct1992developmental}) considers $T=4$. When
$T$ is not small, the cardinality of $\mathcal{D}$ may be too large,
and we may want to reduce it for computational, institutional, and
practical purposes.

One way to reduce the cardinality is to reduce the dimension of the
adaptivity. Define a simpler adaptive treatment rule $\delta_{t}:\{0,1\}\times\{0,1\}\rightarrow\{0,1\}$
that maps only the lagged outcome and treatment onto a treatment allocation
$d_{t}\in\{0,1\}$: 
\begin{align*}
\delta_{t}(y_{t-1},d_{t-1}) & =d_{t}.
\end{align*}
In this case, we have $\left|\mathcal{D}\right|=2^{2T-1}$ instead
of $2^{2^{T}-1}$. An even simpler rule, $\delta_{t}(y_{t-1})$, appears
in \citet{murphy2001marginal}.

Another possibility is to be motivated by institutional or budget
constraints. For example, it may be the case that adaptive allocation
is available every second period or only later in the horizon due
to cost considerations. For example, suppose that the policymaker
decides to introduce the adaptive rule at $t=T$ while maintaining
static rules for $t\le T-1$. Finally, $\mathcal{D}$ can be restricted
by budget or policy constraints that, e.g., the treatment is allocated
to each individual at most once.

\section{Numerical Studies\label{sec:Numerical-Studies}}

We conduct numerical exercises to illustrate (i) the theoretical results
developed in Sections \ref{sec:Partial-Ordering-and}--\ref{sec:Topological-Sorting},
(ii) the role of the assumptions introduced in Section \ref{sec:Additional-Assumptions},
and (iii) the overall computational scale of the problem. For $T=2$,
we consider the following data-generating process: 
\begin{align}
D_{i1} & =1\{\pi_{1}Z_{i1}+\alpha_{i}+v_{i1}\ge0\},\label{eq:dgp1}\\
Y_{i1} & =1\{\mu_{1}D_{i1}+\alpha_{i}+e_{i1}\ge0\},\label{eq:dgp2}\\
D_{i2} & =1\{\pi_{21}Y_{i1}+\pi_{22}D_{i1}+\pi_{23}Z_{i2}+\alpha_{i}+v_{i2}\ge0\},\label{eq:dgp3}\\
Y_{i2} & =1\{\mu_{21}Y_{i1}+\mu_{22}D_{i2}+\alpha_{i}+e_{i2}\ge0\},\label{eq:dgp4}
\end{align}
where $(v_{1},e_{1},v_{2},e_{2},\alpha)$ are mutually independent
and jointly normally distributed, the endogeneity of $D_{i1}$ and
$D_{i2}$ as well as the serial correlation of the unobservables are
captured by the individual effect $\alpha_{i}$, and $(Z_{1},Z_{2})$
are Bernoulli, independent of $(v_{1},e_{1},v_{2},e_{2},\alpha)$.
Notice that the process is intended to satisfy Assumptions SX, K,
M1, and M2. We consider a data-generating process where all the coefficients
in \eqref{eq:dgp1}--\eqref{eq:dgp4} take positive values. In this
exercise, we consider the welfare $W_{k}=E[Y_{2}(\boldsymbol{\delta}_{k}(\cdot))]$.
\begin{figure}
\begin{centering}
\includegraphics[scale=0.3]{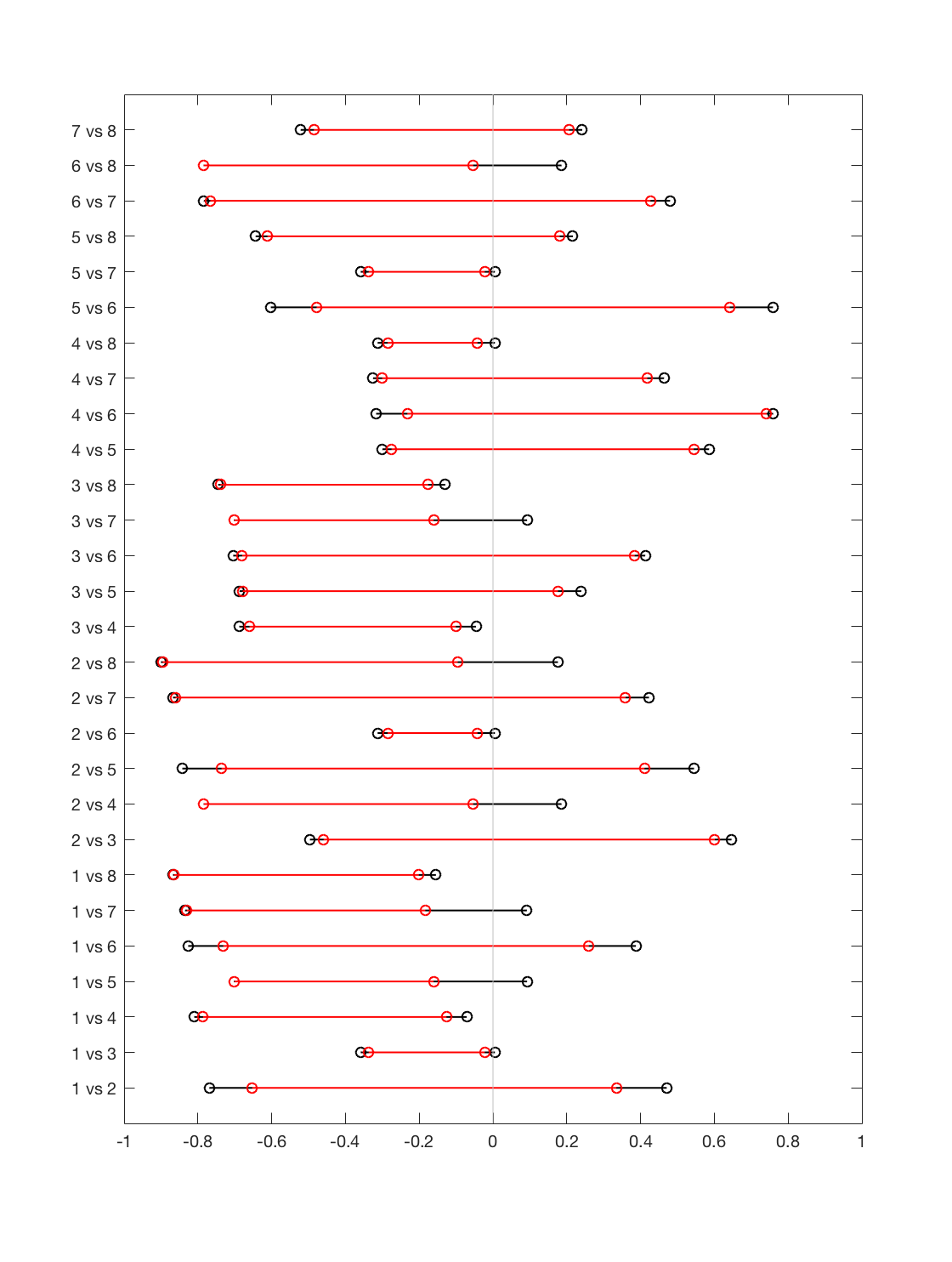}
\par\end{centering}
\caption{Sharp Bounds on Welfare Gaps under M1 (black) and M2 (red)}
\label{fig:bd_gap} 
\end{figure}
\begin{figure}
\begin{centering}
\includegraphics[scale=0.28]{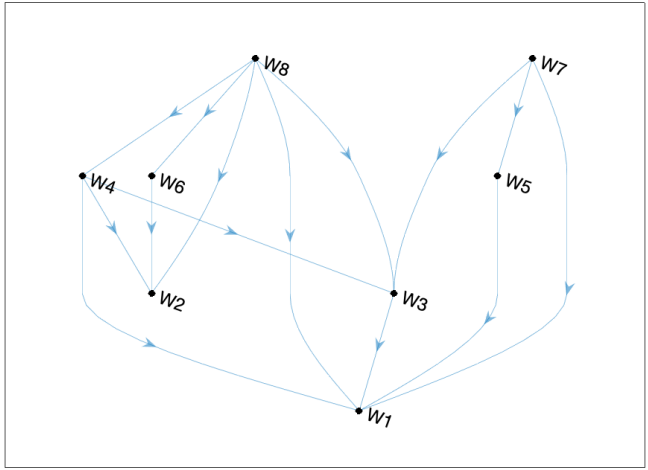}
\par\end{centering}
\caption{Sharp Partial Ordering under M2}
\label{fig:dag}
\end{figure}

As shown in Table \ref{tab:regimes}, there are eight possible regimes,
i.e., $\left|\mathcal{\mathcal{D}}\right|=\left|\mathcal{K}\right|=8$.
We calculate the lower and upper bounds $(L_{k,k'},U_{k,k'})$ on
the welfare gap $W_{k}-W_{k'}$ for all pairs $k,k'\in\{1,...,8\}$
($k<k'$). This is to illustrate the role of assumptions in improving
the bounds. We conduct the bubble sort, which makes {\scriptsize{}$\left(\begin{array}{c}
8\\
2
\end{array}\right)$}$=28$ pair-wise comparisons, resulting in $28\times2$ linear programs
to run.\footnote{There are more efficient algorithms than the bubble sort, such as
the \textit{quick sort}, although they must be modified to incorporate
the distinct feature of our problem: the possible incomparability
that stems from partial identification. Note that for comparable pairs,
transitivity can be applied and thus the total number of comparisons
can be smaller.} As the researcher, we maintain Assumption K. Then, for each linear
program, the dimension of $q$ is $\left|\mathcal{Q}\right|+1=\left|\mathcal{S}\right|=\left|\mathcal{S}_{1}\right|\times\left|\mathcal{S}_{2}\right|=2^{2}\times2^{2}\times2^{8}\times2^{4}=65,536$.\footnote{The dimension is reduced with additional identifying assumptions.}
The number of main constraints is $\dim(p)=2^{3\times2}-2^{2}=60$.
There are $1+65,536$ additional constraints that define the simplex,
i.e., $\sum_{s}q_{s}=1$ and $q_{s}\ge0$ for all $s\in\mathcal{S}$.
Each linear program takes less than a second to calculate $L_{k,k'}$
or $U_{k,k'}$ with a computer with a 2.2 GHz single-core processor
and 16 GB memory and with a modern solver such as CPLEX, MOSEK, and
GUROBI.

Figure \ref{fig:bd_gap} reports the bounds $(L_{k,k'},U_{k,k'})$
on $W_{k}-W_{k'}$ for all $(k,k')\in\{1,...,8\}$ under Assumption
M1 (in black) and Assumption M2 (in red). In the figure, we can determine
the sign of the welfare gap for those bounds that exclude zero. The
difference between the black and red bounds illustrates the role of
Assumption M2 relative to M1. That is, there are more bounds that
avoid the zero vertical line with M2, which is consistent with the
theory. It is important to note that, because M2 does not assume
the direction of monotonicity, the sign of the welfare gap is not
imposed by the assumption but recovered from the data.\footnote{The direction of the monotonicity in M2 can be estimated directly
from the data by using the fact that $\text{sign}(E[Y_{t}|Z_{t}=1,\boldsymbol{Y}^{t-1},\boldsymbol{D}^{t-1}]-E[Y_{t}|Z_{t}=1,\boldsymbol{Y}^{t-1},\boldsymbol{D}^{t-1}])=\text{sign}(E[Y_{t}(D^{t-1},1)|\boldsymbol{Y}^{t-1},\boldsymbol{D}^{t-1}]-E[Y_{t}(D^{t-1},0)|,\boldsymbol{Y}^{t-1},\boldsymbol{D}^{t-1}])$
almost surely. This result is an extension of \citet{SV11} to our
multi-period setting.} Each set of bounds generates an associated partial ordering as a
DAG (produced as an $8\times8$ adjacency matrix).\footnote{Given the solutions of the linear programs, the adjacency matrix and
thus the graph is simple to produce automatically using a standard
software such as MATLAB.} We proceed with Assumption M2 for brevity.

Figure \ref{fig:dag} (identical to Figure \ref{fig:dag-1} in the
Introduction) depicts the sharp partial ordering generated from $(L_{k,k'},U_{k,k'})$'s
under Assumption M2, based on Theorem \ref{thm:DAG}(i). Then, by
Theorem \ref{thm:DAG}(ii), the identified set of $\boldsymbol{\delta}^{*}(\cdot)$
is 
\begin{align*}
\mathcal{D}_{p}^{*} & =\{\boldsymbol{\delta}_{7}(\cdot),\boldsymbol{\delta}_{8}(\cdot)\}.
\end{align*}
The common feature of the elements in $\mathcal{D}_{p}^{*}$ is that
it is optimal to allocate $\delta_{2}=1$ for all $y_{1}\in\{0,1\}$.
Finally, the following is one of the topological sorts produced from
the partial ordering: 
\begin{align*}
(\boldsymbol{\delta}_{8}(\cdot),\boldsymbol{\delta}_{4}(\cdot),\boldsymbol{\delta}_{7}(\cdot),\boldsymbol{\delta}_{3}(\cdot),\boldsymbol{\delta}_{5}(\cdot),\boldsymbol{\delta}_{1}(\cdot),\boldsymbol{\delta}_{6}(\cdot),\boldsymbol{\delta}_{2}(\cdot)).
\end{align*}

We also conducted a parallel analysis but with a slightly different
data-generating process, where (a) all the coefficients in \eqref{eq:dgp1}--\eqref{eq:dgp4}
are positive except $\mu_{22}<0$ and (b) $Z_{2}$ does not exist.
In Case (a), we obtain $\mathcal{D}_{p}^{*}=\{\boldsymbol{\delta}_{2}(\cdot)\}$
as a singleton, i.e., we point identify $\boldsymbol{\delta}^{*}(\cdot)=\boldsymbol{\delta}_{2}(\cdot)$.
The partial ordering for Case (b) is shown in Figure \ref{fig:dag-2}.
In this case, we obtain $\mathcal{D}_{p}^{*}=\{\boldsymbol{\delta}_{6}(\cdot),\boldsymbol{\delta}_{7}(\cdot),\boldsymbol{\delta}_{8}(\cdot)\}$.
\begin{figure}
\begin{centering}
\includegraphics[scale=0.28]{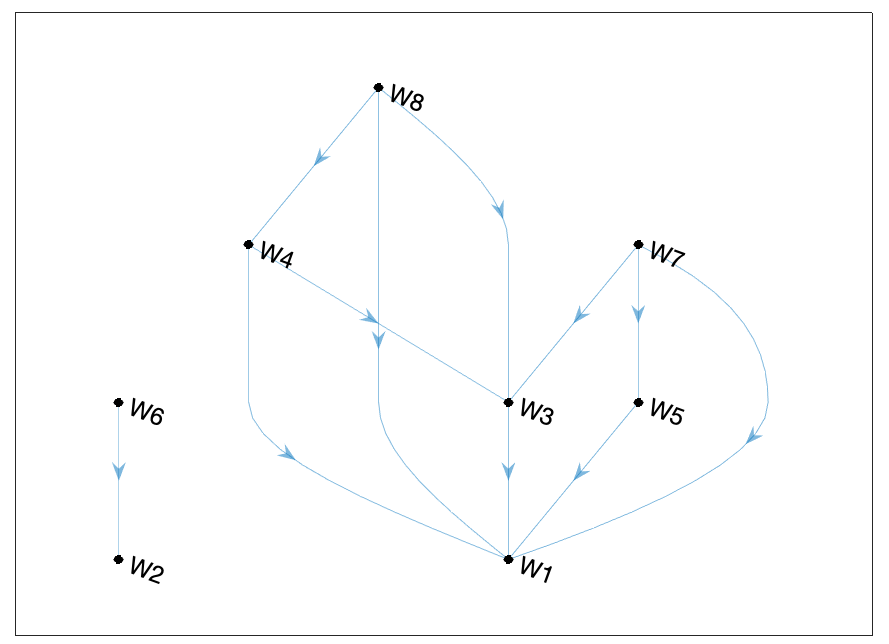}
\par\end{centering}
\caption{Sharp Partial Ordering under M2 (with only $Z_{1}$)}
\label{fig:dag-2}
\end{figure}

\section{Application\label{sec:Application}}

We apply the framework of this paper to understand returns to schooling
and post-school training as a sequence of treatments and to conduct
a policy analysis. Schooling and post-school training are two major
interventions that affect various labor market outcomes, such as earnings
and employment status (\citet{ashenfelter2010handbook}). These treatments
also have influences on health outcomes, either directly or through
the labor market outcomes, and thus of interest for public health
policies (\citet{backlund1996shape}, \citet{mcdonough1997income},
\citet{case2002economic}). We find that the Job Training Partnership
Act (JTPA) is an appropriate setting for our analysis. The JTPA program
is one of the largest publicly-funded training programs in the United
States for economically disadvantaged individuals. Unfortunately,
the JTPA only concerns post-school trainings, which have been the
main focus in the literature (\citet{bloom1997benefits}, \citet{abadie2002instrumental},
\citet{kitagawa2018should}). In this paper, we combine the JTPA Title
II data with those from other sources regarding high school education
to create a data set that allows us to study the effects of a high
school (HS) diploma (or its equivalents) and the subsidized job trainings
as a sequence of treatments. We consider high school diplomas rather
than college degrees because the former is more relevant for the disadvantaged
population of Title II of the JTPA program.

We are interested in the dynamic treatment regime $\boldsymbol{\delta}(\cdot)=(\delta_{1},\delta_{2}(\cdot))$,
where $\delta_{1}$ is a HS diploma and $\delta_{2}(y_{1})$ is the
job training program given pre-program earning type $y_{1}$. The
motivation of having $\delta_{2}$ as a function of $y_{1}$ comes
from acknowledging  the dynamic nature of how earnings are formed
under education and training. The first-stage allocation $\delta_{1}$
will affect the pre-program earning. This response may contain information
about unobserved characteristics of the individuals. Therefore, the
allocation of $\delta_{2}$ can be informed by being adaptive to $y_{1}$.
Then, the counterfactual earning type in the terminal stage given
$\boldsymbol{\delta}(\cdot)$ can be expressed as $Y_{2}(\boldsymbol{\delta}(\cdot))=Y_{2}(\delta_{1},\delta_{2}(Y_{1}(\delta_{1})))$
where $Y_{1}(\delta_{1})$ is the counterfactual earning type in the
first stage given $\delta_{1}$. We are interested in the optimal
regime $\boldsymbol{\delta}^{*}$ that maximizes each of the following
welfares: the average terminal earning $E[Y_{2}(\boldsymbol{\delta}(\cdot))]$
and the average lifetime earning $E[Y_{1}(\delta_{1})]+E[Y_{2}(\boldsymbol{\delta}(\cdot))]$.

For the purpose of our analysis, we combine the JTPA data with data
from the US Census and the National Center for Education Statistics
(NCES), from which we construct the following set of variables: $Y_{2}$
above or below median of 30-month earnings, $D_{2}$ the job training
program, $Z_{2}$ a random assignment of the program, $Y_{1}$ above
or below 80th percentile of pre-program earnings, $D_{1}$ the HS
diploma or GED, and $Z_{1}$ the number of high schools per square
mile.\footnote{For $Y_{1}$, the 80th percentile cutoff is chosen as it is found
to be relevant in defining subpopulations that have contrasting effects
of the program. There are other covariates in the constructed dataset,
but we omit them for the simplicity of our analysis. These variables
can be incorporated as pre-treatment covariates so that the first-stage
treatment is adaptive to them.} The instrument $Z_{1}$ for the HS treatment appears in the literature
(e.g., \citet{neal1997effects}). The number of individuals in the
sample is 9,223. We impose Assumptions SX and M2 throughout the analysis.

\begin{figure}
\begin{centering}
\includegraphics[scale=0.5]{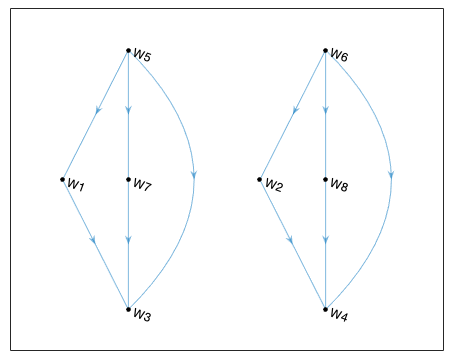}~{\footnotesize{}}%
\begin{tabular}[b]{|c|c|c|c|}
\hline 
{\footnotesize{}Regime \#} & {\footnotesize{}$\delta_{1}$} & {\footnotesize{}$\delta_{2}(1,\delta_{1})$} & {\footnotesize{}$\delta_{2}(0,\delta_{1})$}\tabularnewline
\hline 
\hline 
{\footnotesize{}1} & {\footnotesize{}0} & {\footnotesize{}0} & {\footnotesize{}0}\tabularnewline
\hline 
{\footnotesize{}2} & {\footnotesize{}1} & {\footnotesize{}0} & {\footnotesize{}0}\tabularnewline
\hline 
{\footnotesize{}3} & {\footnotesize{}0} & {\footnotesize{}1} & {\footnotesize{}0}\tabularnewline
\hline 
{\footnotesize{}4} & {\footnotesize{}1} & {\footnotesize{}1} & {\footnotesize{}0}\tabularnewline
\hline 
\textcolor{red}{\footnotesize{}5} & \textcolor{red}{\footnotesize{}0} & \textcolor{red}{\footnotesize{}0} & \textcolor{red}{\footnotesize{}1}\tabularnewline
\hline 
\textcolor{red}{\footnotesize{}6} & \textcolor{red}{\footnotesize{}1} & \textcolor{red}{\footnotesize{}0} & \textcolor{red}{\footnotesize{}1}\tabularnewline
\hline 
{\footnotesize{}7} & {\footnotesize{}0} & {\footnotesize{}1} & {\footnotesize{}1}\tabularnewline
\hline 
{\footnotesize{}8} & {\footnotesize{}1} & {\footnotesize{}1} & {\footnotesize{}1}\tabularnewline
\hline 
\multicolumn{1}{c}{} & \multicolumn{1}{c}{} & \multicolumn{1}{c}{} & \multicolumn{1}{c}{}\tabularnewline
\multicolumn{1}{c}{} & \multicolumn{1}{c}{} & \multicolumn{1}{c}{} & \multicolumn{1}{c}{}\tabularnewline
\end{tabular}{\footnotesize\par}
\par\end{centering}
\caption{Estimated Partial Ordering of $W_{\boldsymbol{\delta}}=E[Y_{2}(\boldsymbol{\delta}(\cdot))]$
and Estimated Set for $\boldsymbol{\delta}^{*}$ (red)}
\label{fig:jtpa}
\end{figure}

The estimation of the partial ordering (i.e., the DAG) and the identified
set $\mathcal{D}_{p}^{*}$ is straightforward given the conditions
in Theorem \ref{thm:DAG} and the linear programs \eqref{eq:LP}.
The only unknown object is $p$, the joint distribution of $(\boldsymbol{Y},\boldsymbol{D},\boldsymbol{Z})$,
which can be estimated as $\hat{p}$, a vector of $\hat{p}_{\boldsymbol{y},\boldsymbol{d}|\boldsymbol{z}}=\sum_{i=1}^{N}1\{\boldsymbol{Y}_{i}=\boldsymbol{y},\boldsymbol{D}_{i}=\boldsymbol{d},\boldsymbol{Z}_{i}=\boldsymbol{z}\}/\sum_{i=1}^{N}1\{\boldsymbol{Z}_{i}=\boldsymbol{z}\}$.

Figure \ref{fig:jtpa} reports the estimated partial ordering of welfare
$W_{\boldsymbol{\delta}}=E[Y_{2}(\boldsymbol{\delta}(\cdot))]$ (left)
and the resulting estimated set $\hat{\mathcal{D}}$ (right, highlighted
in red) that we estimate using $\{(\boldsymbol{Y}_{i},\boldsymbol{D}_{i},\boldsymbol{Z}_{i})\}_{i=1}^{9,223}$.
Although there exist welfares that cannot be ordered, we can conclude
\textit{with certainty} that allocating the program only to the low
earning type ($Y_{2}=0$) is welfare optimal, as it is the common
implication of Regimes 5 and 6 in $\hat{\mathcal{D}}$. Also, the
second best policy is to either allocate the program to the entire
population or none, while allocating it only to the high earning type
($Y_{2}=1$) produces the lowest welfare. This result is consistent
with the eligibility of Title II of the JTPA, which concerns individuals
with ``barriers to employment'' where the most common barriers are
unemployment spells and high-school dropout status (\citet{abadie2002instrumental}).
Possibly due to the fact that the first-stage instrument $Z_{1}$
is not strong enough, we have the two disconnected sub-DAGs and thus
the two elements in $\hat{\mathcal{D}}$, which are agnostic about
the optimal allocation in the first stage or the complementarity between
the first- and second- stage allocations.

\begin{figure}
\begin{centering}
\includegraphics[scale=0.5]{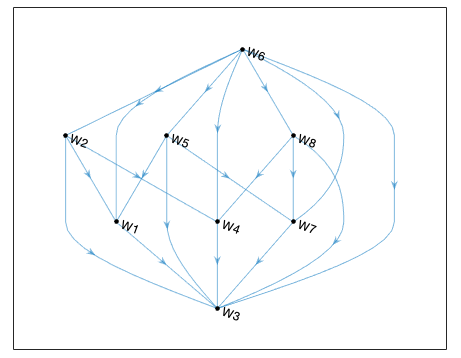}~{\footnotesize{}}%
\begin{tabular}[b]{|c|c|c|c|}
\hline 
{\footnotesize{}Regime \#} & {\footnotesize{}$\delta_{1}$} & {\footnotesize{}$\delta_{2}(1,\delta_{1})$} & {\footnotesize{}$\delta_{2}(0,\delta_{1})$}\tabularnewline
\hline 
\hline 
{\footnotesize{}1} & {\footnotesize{}0} & {\footnotesize{}0} & {\footnotesize{}0}\tabularnewline
\hline 
{\footnotesize{}2} & {\footnotesize{}1} & {\footnotesize{}0} & {\footnotesize{}0}\tabularnewline
\hline 
{\footnotesize{}3} & {\footnotesize{}0} & {\footnotesize{}1} & {\footnotesize{}0}\tabularnewline
\hline 
{\footnotesize{}4} & {\footnotesize{}1} & {\footnotesize{}1} & {\footnotesize{}0}\tabularnewline
\hline 
{\footnotesize{}5} & {\footnotesize{}0} & {\footnotesize{}0} & {\footnotesize{}1}\tabularnewline
\hline 
\textcolor{red}{\footnotesize{}6} & \textcolor{red}{\footnotesize{}1} & \textcolor{red}{\footnotesize{}0} & \textcolor{red}{\footnotesize{}1}\tabularnewline
\hline 
{\footnotesize{}7} & {\footnotesize{}0} & {\footnotesize{}1} & {\footnotesize{}1}\tabularnewline
\hline 
{\footnotesize{}8} & {\footnotesize{}1} & {\footnotesize{}1} & {\footnotesize{}1}\tabularnewline
\hline 
\multicolumn{1}{c}{} & \multicolumn{1}{c}{} & \multicolumn{1}{c}{} & \multicolumn{1}{c}{}\tabularnewline
\multicolumn{1}{c}{} & \multicolumn{1}{c}{} & \multicolumn{1}{c}{} & \multicolumn{1}{c}{}\tabularnewline
\end{tabular}{\footnotesize\par}
\par\end{centering}
\caption{Estimated Partial Ordering of $W_{\boldsymbol{\delta}}=E[Y_{1}(\delta_{1})]+E[Y_{2}(\boldsymbol{\delta}(\cdot))]$
and Estimated Set for $\boldsymbol{\delta}^{*}$ (red)}
\label{fig:jtpa-1}
\end{figure}
Figure \ref{fig:jtpa-1} reports the estimated partial ordering and
the estimated set with $W_{\boldsymbol{\delta}}=E[Y_{1}(\delta_{1})]+E[Y_{2}(\boldsymbol{\delta}(\cdot))]$.
Despite the partial ordering, $\hat{\mathcal{D}}$ is a singleton
for this welfare and $\boldsymbol{\delta}^{*}$ is estimated to be
Regime 6. According to this regime, the average lifetime earning is
maximized by allocating HS education to all individuals and the training
program to individuals with low pre-program earnings. As discussed
earlier, additional policy implications can be obtained by inspecting
suboptimal regimes. Interestingly, Regime 8, which allocates the treatments
regardless, is inferior to Regime 6. This can be useful knowledge
for policy makers especially because Regime 8 is the most ``expensive''
regime. Similarly, Regime 1, which does not allocate any treatments
regardless and thus is the least expensive regime, is superior to
Regime 3, which allocates the program to high-earning individuals.
The estimated partial ordering shows how more expensive policies
do not necessarily achieve greater welfare. Moreover, these conclusions
can be compelling as they are drawn without making arbitrary parametric
restrictions nor strong identifying assumptions.

Finally, as an alternative approach, we use $\{(\boldsymbol{Y}_{i},\boldsymbol{D}_{i},Z_{2i})\}_{i=1}^{9,223}$
for estimation, that is, we drop $Z_{1}$ and only use the exogenous
variation from $Z_{2}$. This reflects a possible concern that $Z_{1}$
may not be as valid as $Z_{2}$. Then, the estimated partial ordering
looks identical to the left panel of Figure \ref{fig:jtpa} whether
the targeted welfare is $E[Y_{2}(\boldsymbol{\delta}(\cdot))]$ or
$E[Y_{1}(\delta_{1})]+E[Y_{2}(\boldsymbol{\delta}(\cdot))]$. Clearly,
without $Z_{1}$, the procedure lacks the ability to determine the
first stage's best treatment. Note that, even though the partial
ordering for $E[Y_{2}(\boldsymbol{\delta}(\cdot))]$ is identical
for the case of one versus two instruments, the inference results
will reflect such difference by producing a larger confidence set
for the former case.

\section{Inference\label{sec:Discussions:-Estimation-and}}

Although we do not fully investigate inference in the current paper,
we briefly discuss it. To conduct inference on the optimal regime
$\boldsymbol{\delta}^{*}(\cdot)$, we can construct a confidence set
(CS) for $\mathcal{D}_{p}^{*}$ with the following procedure. We consider
a sequence of hypothesis tests, in which we eliminate regimes that
are (statistically) significantly inferior to others. This is a statistical
analog of the elimination procedure encoded in \eqref{eq:char_ID_set}
or \eqref{eq:char_ID_set2}. For each test given $\tilde{\mathcal{K}}\subset\mathcal{K}$,
we construct a null hypothesis that $W_{k}$ and $W_{k'}$ are not
comparable for all $k,k'\in\tilde{\mathcal{K}}$. Given \eqref{eq:LP},
the incomparability of $W_{k}$ and $W_{k'}$ is equivalent to $L_{k,k'}\le0\le U_{k,k'}$.
In constructing the null hypothesis, it is helpful to invoke strong
duality for the primal programs \eqref{eq:LP} and write the following
dual programs:
\begin{align}
U_{k,k'}= & \min_{\lambda}\tilde{p}'\lambda,\qquad s.t.\quad\tilde{B}'\lambda\ge\Delta_{k,k'}'\label{eq:LP_dual1}\\
L_{k,k'}= & \max_{\lambda}-\tilde{p}'\lambda,\qquad s.t.\quad\tilde{B}'\lambda\ge-\Delta_{k,k'}'\label{eq:LP_dual2}
\end{align}
where $\tilde{B}\equiv\left[\begin{array}{c}
B\\
\boldsymbol{1}'
\end{array}\right]$ is a $(d_{p}+1)\times d_{q}$ matrix with $\boldsymbol{1}$ being
a $d_{q}\times1$ vector of ones and $\tilde{p}\equiv\left[\begin{array}{c}
p\\
1
\end{array}\right]$ is a $(d_{p}+1)\times1$ vector. By using a vertex enumeration algorithm
(e.g., \citet{avis1992pivoting}), one can find all (or a relevant
subset) of vertices of the polyhedra $\{\lambda:\tilde{B}'\lambda\ge\Delta_{k,k'}'\}$
and $\{\lambda:\tilde{B}'\lambda\ge-\Delta_{k,k'}'\}$. Let $\Lambda_{U,k,k'}\equiv\{\lambda_{1},...,\lambda_{J_{U,k,k'}}\}$
and $\Lambda_{L,k,k'}\equiv\{\lambda_{1},...,\lambda_{J_{L,k,k'}}\}$
be the sets that collect such vertices, respectively. Then, it is
easy to see that $U_{k,k'}=\min_{\lambda\in\Lambda_{U,k,k'}}\tilde{p}'\lambda$
and $L_{k,k'}=\max_{\lambda\in\Lambda_{L,k,k'}}-\tilde{p}'\lambda$.
Therefore, the null hypothesis that $L_{k,k'}\le0\le U_{k,k'}$ can
be written as
\begin{align}
H_{0,\tilde{\mathcal{K}}} & :\tilde{p}'\lambda>0\text{ for all }\lambda\in\Lambda_{\tilde{\mathcal{K}}}.\label{eq:null}
\end{align}
where $\Lambda_{\tilde{\mathcal{K}}}\equiv\bigcup_{k,k'\in\tilde{\mathcal{K}}}\Lambda_{k,k'}$
with $\Lambda_{k,k'}\equiv\Lambda_{U,k,k'}\cup\Lambda_{L,k,k'}$.

Then, the procedure of constructing the CS, denoted as $\widehat{\mathcal{D}}_{CS}$,
is as follows: \textit{Step 0. Initially set $\tilde{\mathcal{K}}=\mathcal{K}$.
Step 1. Test $H_{0,\tilde{\mathcal{K}}}$ at level $\alpha$ with
test function $\phi_{\tilde{\mathcal{K}}}\in\{0,1\}$. Step 2. If
$H_{0,\tilde{\mathcal{K}}}$ is not rejected, define $\widehat{\mathcal{D}}_{CS}=\{\boldsymbol{\delta}_{k}(\cdot):k\in\tilde{\mathcal{K}}\}$;
otherwise eliminate vertex $k_{\tilde{\mathcal{K}}}$ from $\tilde{\mathcal{K}}$
and repeat from Step 1.} In Step 1, $T_{\tilde{\mathcal{K}}}\equiv\min_{k,k'\in\tilde{\mathcal{K}}}t_{k,k'}$
can be used as the test statistic for $H_{0,\tilde{\mathcal{K}}}$
where $t_{k,k'}\equiv\min_{\lambda\in\Lambda_{k,k'}}t_{\lambda}$
and $t_{\lambda}$ is a standard $t$-statistic. The distribution
of $T_{\tilde{\mathcal{K}}}$ can be estimated using bootstrap. In
Step 2, a candidate for $k_{\tilde{\mathcal{K}}}$ is $k_{\tilde{\mathcal{K}}}\equiv\arg\min_{k\in\tilde{\mathcal{K}}}\min_{k'\in\tilde{\mathcal{K}}}t_{k,k'}$. 

The eliminated vertices (i.e., regimes) are statistically suboptimal
regimes, which are already policy-relevant outputs of the procedure.
Note that the null hypothesis \eqref{eq:null} consists of multiple
inequalities. This incurs the issue of uniformity in that the null
distribution depends on binding inequalities, whose identities are
unknown. Such a problem has been studied in the literature, as in
\citet{hansen2005test}, \citet{andrews2010inference}, and \citet{chen2014testing}.
\citet{hansen2011model}'s bootstrap approach for constructing the
model confidence set builds on \citet{hansen2005test}. We apply a
similar inference method as in \citet{hansen2011model}, but in this
novel context and by being conscious about the computational challenge
of our problem. In particular, the dual problem \eqref{eq:LP_dual1}--\eqref{eq:LP_dual2}
and the vertex enumeration algorithm are introduced to ease the computational
burden in simulating the distribution of $T_{\tilde{\mathcal{K}}}$.
That is, the calculation of $\Lambda_{\tilde{\mathcal{K}}}$, the
computationally intensive step, occurs only once, and then for each
bootstrap sample, it suffices to calculate $\hat{p}$ instead of solving
the linear programs \eqref{eq:LP} for all $k,k'\in\tilde{\mathcal{K}}$.

Analogous to \citet{hansen2011model}, we can show that the resulting
CS has desirable properties. Let $H_{A,\tilde{\mathcal{K}}}$ be the
alternative hypothesis.

\begin{asCS}For any $\tilde{\mathcal{K}}$, (i) $\limsup_{n\rightarrow\infty}\Pr[\phi_{\tilde{\mathcal{K}}}=1|H_{0,\tilde{\mathcal{K}}}]\le\alpha$,
(ii) $\lim_{n\rightarrow\infty}\Pr[\phi_{\tilde{\mathcal{K}}}=1|H_{A,\tilde{\mathcal{K}}}]=1$,
and (iii) $\lim_{n\rightarrow\infty}\Pr[\boldsymbol{\delta}_{k_{\tilde{\mathcal{K}}}}(\cdot)\in\mathcal{\mathcal{D}}_{p}^{*}|H_{A,\tilde{\mathcal{K}}}]=0$.\end{asCS}

\begin{proposition}Under Assumption CS, it satisfies that $\lim\inf_{n\rightarrow\infty}\Pr[\mathcal{D}_{p}^{*}\subset\widehat{\mathcal{D}}_{CS}]\ge1-\alpha$
and $\lim_{n\rightarrow\infty}\Pr[\boldsymbol{\delta}(\cdot)\in\widehat{\mathcal{D}}_{CS}]=0$
for all $\boldsymbol{\delta}(\cdot)\notin\mathcal{\mathcal{D}}_{p}^{*}$.\end{proposition}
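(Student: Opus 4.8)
The plan is to adapt the sequential model-confidence-set argument of \citet{hansen2011model} to the partially ordered setting here, leaning on two structural facts about the poset $G(\mathcal{K},\mathcal{E}_p)$. First, by Theorem \ref{thm:DAG}(ii) the identified set is the antichain of maximal elements $\mathcal{K}_p^{(1)}$, so every pair $k,k'\in\mathcal{D}_p^*$ is incomparable and hence satisfies $L_{k,k'}\le 0\le U_{k,k'}$; this is exactly the event under which $H_{0,\mathcal{D}_p^*}$ holds, so the null is true at the target set. Second, since $\mathcal{K}$ is finite, every non-maximal index is dominated by some maximal index, and therefore the only antichain containing $\mathcal{D}_p^*$ is $\mathcal{D}_p^*$ itself. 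These facts let me identify ``the running set equals $\mathcal{D}_p^*$'' with ``$H_0$ is true'' and ``the running set properly contains a dominated index'' with ``$H_A$ is true.''

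I would build the argument around an \emph{ideal elimination path}. Let $J\equiv|\mathcal{K}|-|\mathcal{D}_p^*|$ be the fixed, finite number of suboptimal regimes, and let $\mathcal{I}_n$ denote the event that at each of the first $J$ tests the procedure both rejects and removes an index outside $\mathcal{D}_p^*$. Whenever the running set $\tilde{\mathcal{K}}$ satisfies $\tilde{\mathcal{K}}\supsetneq\mathcal{D}_p^*$ while still containing all of $\mathcal{D}_p^*$, it contains a maximal index dominating one of its own suboptimal members, so $H_{A,\tilde{\mathcal{K}}}$ holds; Assumption CS(ii) then delivers rejection with probability tending to one and CS(iii) delivers removal of a suboptimal index with probability tending to one. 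Because there are only finitely many subsets of $\mathcal{K}$ and only $J$ steps, a finite union bound upgrades these per-set limits to $\Pr[\mathcal{I}_n]\to 1$. On $\mathcal{I}_n$ all $J$ suboptimal regimes have been removed by step $J$, so the running set is precisely $\mathcal{D}_p^*$ when the $(J+1)$-th test is performed.

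Consistency (the second display) then follows at once: for any fixed $\boldsymbol{\delta}(\cdot)\notin\mathcal{D}_p^*$ its index is removed by step $J$ on $\mathcal{I}_n$, whence $\Pr[\boldsymbol{\delta}(\cdot)\in\widehat{\mathcal{D}}_{CS}]\le\Pr[\mathcal{I}_n^{c}]\to 0$. For coverage, a maximal index can be eliminated only in one of two ways: (a) at a step with $\tilde{\mathcal{K}}\supsetneq\mathcal{D}_p^*$, where $H_{A,\tilde{\mathcal{K}}}$ holds and CS(iii) makes the removed index suboptimal with probability tending to one, ruling out removal of a maximal index; or (b) at the $(J+1)$-th test, where the set equals $\mathcal{D}_p^*$, the true null $H_{0,\mathcal{D}_p^*}$ is in force, and CS(i) caps the asymptotic rejection probability at $\alpha$. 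Adding the $o(1)$ contribution of $\mathcal{I}_n^{c}$ to the $\alpha+o(1)$ contribution of step $J+1$ gives $\limsup_n\Pr[\mathcal{D}_p^*\not\subset\widehat{\mathcal{D}}_{CS}]\le\alpha$, which is the claimed $\liminf_n\Pr[\mathcal{D}_p^*\subset\widehat{\mathcal{D}}_{CS}]\ge 1-\alpha$.

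I expect the main obstacle to be the bookkeeping around the \emph{random, data-dependent} sequence of sets $\tilde{\mathcal{K}}$ at which the tests are evaluated: Assumption CS is phrased for each fixed $\tilde{\mathcal{K}}$, whereas the realized path is stochastic and the elimination choice $k_{\tilde{\mathcal{K}}}$ can differ across samples. The resolution is again the finiteness of $\mathcal{K}$---at most $2^{|\mathcal{K}|}$ candidate sets and at most $|\mathcal{K}|$ steps---so the limits in CS(i)--(iii) become uniform over the relevant finite collection, and the two poset facts guarantee that, with probability tending to one, the realized path coincides with the ideal one down to $\mathcal{D}_p^*$. The delicate point to check is that the target set $\mathcal{D}_p^*$ is reached before any maximal index is removed, which rests squarely on CS(iii) holding at every intermediate set $\tilde{\mathcal{K}}\supsetneq\mathcal{D}_p^*$.
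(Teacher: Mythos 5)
The paper never actually proves this proposition---it is stated with only the remark that the result follows ``analogous to \citet{hansen2011model}''---and your argument is precisely the sequential model-confidence-set proof that remark points to, so you have in effect supplied the paper's intended approach, and it is correct. The one step you should make explicit is why, on the good path, every remaining suboptimal index is dominated by an index \emph{still present} in the running set: the dominator guaranteed by \eqref{eq:char_ID_set2} may itself be suboptimal and already eliminated, so your claim that every non-maximal index is dominated by a \emph{maximal} one requires transitivity of the relation $L_{k,k'}>0$, which holds because all gaps are minimized over the common feasible set $\{q:Bq=p\}\cap\mathcal{Q}$, giving $L_{k,k''}\ge L_{k,k'}+L_{k',k''}>0$ along any chain (this is the same common-$q$ structure underlying Theorem \ref{thm:DAG}). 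With that observation added, your finite union bound over candidate running sets, the $o(1)+\alpha$ accounting at the first step where the running set equals $\mathcal{K}_p^{(1)}$ (where $H_{0}$ indeed holds, since \eqref{eq:char_ID_set2} and $U_{k,k'}=-L_{k',k}$ give $L_{k,k'}\le0\le U_{k,k'}$ for all pairs of maximal indices), and the inclusion of $\{\boldsymbol{\delta}(\cdot)\in\widehat{\mathcal{D}}_{CS}\}$ in the complement of your ideal-path event deliver both displays.
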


The procedure of constructing the CS does not suffer from the problem
of multiple testings. This is because the procedure stops as soon
as the first hypothesis is not rejected, and asymptotically, maximal
elements will not be questioned before all sub-optimal regimes are
eliminated. The resulting CS can also be used to conduct a specification
test for a less palatable assumption, such as Assumption M2. We can
refute the assumption when the CS under that assumption is empty.

Inference on the welfare bounds in \eqref{eq:LP-1} for a given regime
can be conducted by using recent results as in \citet{deb2017revealed},
who develop uniformly valid inference that can be applied to bounds
obtained via linear programming. Inference on optimized welfare $W_{\boldsymbol{\delta}^{*}}$
or $\max_{\boldsymbol{\delta}(\cdot)\in\widehat{\mathcal{D}}_{CS}}W_{\boldsymbol{\delta}}$
can also be an interesting problem. \citet{andrews2019inference}
consider inference on optimized welfare (evaluated at the estimated
policy) in the context of \citet{kitagawa2018should}, but with point-identified
welfare under the unconfoundedness assumption. Extending the framework
to the current setting with partially identified welfare and dynamic
regimes under treatment endogeneity would also be interesting future
work.

\appendix

\section{Appendix}

\subsection{Stochastic Regimes\label{subsec:Stochastic}}

For each $t=1,...,T$, define an adaptive \textit{stochastic} treatment
rule $\tilde{\delta}_{t}:\{0,1\}^{t-1}\times\{0,1\}^{t-1}\rightarrow[0,1]$
that allocates the probability of treatment: 
\begin{align}
\tilde{\delta}_{t}(\boldsymbol{y}^{t-1},\tilde{\boldsymbol{d}}^{t-1}) & =\tilde{d}_{t}\in[0,1].\label{eq:trt_rule-1}
\end{align}
Then, the vector of these $\tilde{\delta}_{t}$'s is a \textit{dynamic
stochastic regime} $\tilde{\boldsymbol{\delta}}(\cdot)\equiv\tilde{\boldsymbol{\delta}}^{T}(\cdot)\in\mathcal{D}_{stoch}$
where $\mathcal{D}_{stoch}$ is the set of all possible stochastic
regimes.\footnote{Dynamic stochastic regimes are considered in, e.g., \citet{murphy2001marginal},
\citet{murphy2003optimal}, and \citet{manski2004statistical}.} A deterministic regime is a special case where $\tilde{\delta}_{t}(\cdot)$
takes the extreme values of $1$ and $0$. Therefore, $\mathcal{D}\subset\mathcal{D}_{stoch}$
where $\mathcal{D}$ is the set of deterministic regimes. We define
$Y_{T}(\tilde{\boldsymbol{\delta}}(\cdot))$ with $\tilde{\boldsymbol{\delta}}(\cdot)\in\mathcal{D}_{stoch}$
as the counterfactual outcome $Y_{T}(\boldsymbol{\delta}(\cdot))$
where the deterministic rule $\delta_{t}(\cdot)=1$ is randomly assigned
with probability $\tilde{\delta}_{t}(\cdot)$ and $\delta_{t}(\cdot)=0$
otherwise for all $t\le T$. Finally, define
\begin{align*}
W_{\tilde{\boldsymbol{\delta}}} & \equiv\mathbb{E}[Y_{T}(\tilde{\boldsymbol{\delta}}(\cdot))],
\end{align*}
where $\mathbb{E}$ denotes an expectation over the counterfactual
outcome and the random mechanism defining a rule, and define $\tilde{\boldsymbol{\delta}}^{*}(\cdot)\equiv\arg\max_{\tilde{\boldsymbol{\delta}}(\cdot)\in\mathcal{D}_{stoch}}W_{\tilde{\boldsymbol{\delta}}}$.
The following theorem show that a deterministic regime is achieved
as being optimal even though stochastic regimes are allow.

\begin{theorem}\label{thm:stoch_regime}Suppose $W_{\tilde{\boldsymbol{\delta}}}\equiv\mathbb{E}[Y_{T}(\tilde{\boldsymbol{\delta}}(\cdot))]$
for $\tilde{\boldsymbol{\delta}}(\cdot)\in\mathcal{D}_{stoch}$ and
$W_{\boldsymbol{\delta}}\equiv E[Y_{T}(\boldsymbol{\delta}(\cdot))]$
for $\boldsymbol{\delta}(\cdot)\in\mathcal{D}$. It satisfies that
\begin{align*}
\boldsymbol{\delta}^{*}(\cdot)\equiv\arg\max_{\boldsymbol{\delta}(\cdot)\in\mathcal{D}}W_{\boldsymbol{\delta}} & =\arg\max_{\tilde{\boldsymbol{\delta}}(\cdot)\in\mathcal{D}_{stoch}}W_{\tilde{\boldsymbol{\delta}}}.
\end{align*}

\end{theorem}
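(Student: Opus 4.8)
The plan is to exhibit $W_{\tilde{\boldsymbol{\delta}}}$ as a \emph{multiaffine} function of the treatment probabilities that parametrize a stochastic regime, and then apply the elementary fact that a function affine in each coordinate separately is maximized over a box at a vertex. Collect the decision variables of $\tilde{\boldsymbol{\delta}}(\cdot)\in\mathcal{D}_{stoch}$ into a vector $\theta$ listing $\tilde{\delta}_t(\boldsymbol{y}^{t-1},\boldsymbol{d}^{t-1})\in[0,1]$ over all stages and histories; then $\mathcal{D}_{stoch}$ is a finite-dimensional hypercube $[0,1]^{n}$ whose vertices $\theta\in\{0,1\}^{n}$ are deterministic regimes in $\mathcal{D}$. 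The goal is to show that the maximum of $W$ over this hypercube is attained at such a vertex and that, under the maintained no-ties assumption, this vertex is the unique $\boldsymbol{\delta}^{*}(\cdot)$.

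First I would write the welfare explicitly. Because the rule-defining randomizations are independent across stages and of the potential outcomes, realizing $\tilde{\boldsymbol{\delta}}(\cdot)$ amounts to drawing independent Bernoulli treatments along each realized path via the bridge recursion \eqref{eq:a's}. For $T=2$ this yields
\begin{align*}
W_{\tilde{\boldsymbol{\delta}}} &= \sum_{d_1\in\{0,1\}}\tilde{\delta}_1^{\,d_1}(1-\tilde{\delta}_1)^{1-d_1}\\
&\quad\times\sum_{y_1,d_2\in\{0,1\}}\tilde{\delta}_2(y_1,d_1)^{d_2}\bigl(1-\tilde{\delta}_2(y_1,d_1)\bigr)^{1-d_2}\,\Pr\!\left[Y_1(d_1)=y_1,\,Y_2(d_1,d_2)=1\right],
\end{align*}
and analogously for general $T$ by iterating over full realization paths. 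Each Bernoulli factor $\tilde{\delta}^{\,d}(1-\tilde{\delta})^{1-d}$ with $d\in\{0,1\}$ is affine in $\tilde{\delta}$, and distinct histories index distinct decision variables; hence every coordinate of $\theta$ enters to at most the first power, so $W_{\tilde{\boldsymbol{\delta}}}$ is affine in each coordinate separately.

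Next I would invoke the vertex fact: fixing all but one coordinate leaves an affine function on $[0,1]$, maximized at an endpoint, so iterating rounds any $\theta$ to a point of $\{0,1\}^{n}$ without decreasing $W$. Since such specifications are deterministic regimes, $\max_{\tilde{\boldsymbol{\delta}}\in\mathcal{D}_{stoch}}W_{\tilde{\boldsymbol{\delta}}}=\max_{\boldsymbol{\delta}\in\mathcal{D}}W_{\boldsymbol{\delta}}$, and a deterministic regime attains the stochastic optimum. For the stated set equality I would argue at the level of reachable histories: if an optimal regime assigned a fractional probability at a positively-reached history, multiaffineness would force zero slope there, so rounding that coordinate up versus down—and rounding all remaining fractional reachable coordinates identically—produces two \emph{distinct} deterministic regimes of equal, maximal welfare, contradicting uniqueness of $\boldsymbol{\delta}^{*}(\cdot)$. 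Hence the optimizer is deterministic along every reachable path and coincides with $\boldsymbol{\delta}^{*}(\cdot)$. (Equivalently, one could proceed by backward induction, choosing at each reachable node the larger of the two continuation values, the no-ties assumption guaranteeing strict preference at every such node.)

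The main obstacle is the first step: rigorously expanding $W_{\tilde{\boldsymbol{\delta}}}$ for general $T$ and confirming multiaffineness despite adaptivity, since the treatment drawn at stage $t$ feeds through the realized outcome $y_{t-1}$ into the argument of $\tilde{\delta}_t(\cdot)$. Summing over entire realization paths resolves this, because along a fixed path the history entering $\tilde{\delta}_t$ is pinned down and contributes a single affine Bernoulli factor. A secondary point requiring care is that $W$ depends on $\tilde{\boldsymbol{\delta}}$ only through histories actually reached, so off-path probabilities are free; the equality of $\arg\max$ sets is therefore to be read at the level of effective regimes in $\mathcal{D}$, which is precisely why the rounding argument is confined to reachable coordinates, where the no-ties assumption bites.
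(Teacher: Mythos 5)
Your route is genuinely different from the paper's. The paper argues by backward induction: it defines value functions and optimal rules stage by stage, notes that the stage-$t$ objective is affine in the single randomization probability $\tilde{d}_t$ (namely $\tilde{W}_t=\tilde{d}_t\{W_t(1,\cdot)-W_t(0,\cdot)\}+W_t(0,\cdot)$), and concludes that the argmax over $[0,1]$ and over $\{0,1\}$ coincide at every stage, so the dynamic-programming solutions of the stochastic and deterministic problems are rule-by-rule identical. You instead expand welfare globally over realization paths, establish multiaffineness in the full vector of history-indexed probabilities, and invoke the vertex-rounding fact for boxes. Both proofs hinge on the same affinity insight, but yours is non-recursive and delivers $\max_{\tilde{\boldsymbol{\delta}}\in\mathcal{D}_{stoch}}W_{\tilde{\boldsymbol{\delta}}}=\max_{\boldsymbol{\delta}\in\mathcal{D}}W_{\boldsymbol{\delta}}$ very cleanly; the paper's recursion, by contrast, identifies the optimizers directly, which is exactly the part where your argument needs extra work (your closing parenthetical about backward induction \emph{is} the paper's proof).

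There is one genuine gap, in your uniqueness/contradiction step. You round the fractional coordinate $j$ (at a positively-reached history $h$) up versus down, round ``all remaining fractional reachable coordinates identically,'' and claim the two resulting deterministic regimes are distinct. But reachability is endogenous to the rounding: with $T=2$, take $\tilde{\delta}_1=1/2$ and $j=\tilde{\delta}_2(y_1,1)$ fractional; if the identical rounding sends $\tilde{\delta}_1$ to $0$ in both copies (legitimate, since the slope in $\tilde{\delta}_1$ is zero at an optimum), the history $(y_1,d_1=1)$ becomes unreachable and the two regimes collapse to the \emph{same} effective element of $\mathcal{D}$ --- recall that $\mathcal{D}$, as indexed in Table \ref{tab:regimes}, records only $\delta_2(y_1,\delta_1)$, i.e.\ on-path assignments --- so no contradiction with uniqueness is obtained. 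The fix is short: apply the argument to an \emph{earliest-stage} fractional coordinate among those at positively-reached histories. Then every coordinate at an earlier stage along the path to $h$ is already deterministic (by minimality) and necessarily consistent with reaching $h$ (since $h$ is positively reached), so no rounding of other coordinates can destroy the reachability of $h$; the two rounded regimes then differ at an on-path coordinate, are distinct elements of $\mathcal{D}$, and both attain the maximum, contradicting the maintained uniqueness of $\boldsymbol{\delta}^{*}(\cdot)$. With that modification your proof is complete.
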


By the law of iterative expectation, we have
\begin{align}
\mathbb{E}[Y_{T}(\tilde{\boldsymbol{\delta}}(\cdot))] & =\mathbb{E}\left[\mathbb{E}\left[\left.\cdots\mathbb{E}\left[\left.\mathbb{E}[Y_{T}(\tilde{\boldsymbol{d}})|\boldsymbol{Y}^{T-1}(\tilde{\boldsymbol{d}}^{T-1})]\right|\boldsymbol{Y}^{T-2}(\tilde{\boldsymbol{d}}^{T-2})\right]\cdots\right|Y_{1}(\tilde{d}_{1})\right]\right],\label{eq:mean_Y(delta)}
\end{align}
where the bridge variables $\tilde{\boldsymbol{d}}=(\tilde{d}_{1},...,\tilde{d}_{T})$
satisfy 
\begin{align*}
\tilde{d}_{1} & =\tilde{\delta}_{1},\\
\tilde{d}_{2} & =\tilde{\delta}_{2}(Y_{1}(\tilde{d}_{1}),\tilde{d}_{1}),\\
\tilde{d}_{3} & =\tilde{\delta}_{3}(\boldsymbol{Y}^{2}(\tilde{\boldsymbol{d}}^{2}),\tilde{\boldsymbol{d}}^{2}),\\
 & \vdots\\
\tilde{d}_{T} & =\tilde{\delta}_{T}(\boldsymbol{Y}^{T-1}(\tilde{\boldsymbol{d}}^{T-1}),\tilde{\boldsymbol{d}}^{T-1}).
\end{align*}
Given \eqref{eq:mean_Y(delta)}, we prove the theorem by showing that
the solution $\tilde{\boldsymbol{\delta}}^{*}(\cdot)$ can be justified
by backward induction in a finite-horizon dynamic programming. To
illustrate this with deterministic regimes when $T=2$, we have 
\begin{align}
\delta_{2}^{*}(y_{1},d_{1}) & =\arg\max_{d_{2}}E[Y_{2}(\boldsymbol{d})|Y_{1}(d_{1})=y_{1}],\label{eq:backward_T2_2}
\end{align}
and, by defining $V_{2}(y_{1},d_{1})\equiv\max_{d_{2}}E[Y_{2}(\boldsymbol{d})|Y_{1}(d_{1})=y_{1}]$,
\begin{align}
\delta_{1}^{*}= & \arg\max_{d_{1}}E[V_{2}(Y_{1}(d_{1}),d_{1})].\label{eq:backward_T2_1}
\end{align}
Then, $\boldsymbol{\delta}^{*}(\cdot)$ is equal to the collection
of these solutions: $\boldsymbol{\delta}^{*}(\cdot)=(\delta_{1}^{*},\delta_{2}^{*}(\cdot))$.

\begin{proof}First, given \eqref{eq:mean_Y(delta)}, the optimal
stochastic rule in the final period can be defined as
\begin{align*}
\tilde{\delta}_{T}^{*}(\boldsymbol{y}^{T-1},\tilde{\boldsymbol{d}}^{T-1})\equiv & \arg\max_{\tilde{d}_{T}\in[0,1]}\mathbb{E}[Y_{T}(\tilde{\boldsymbol{d}})|\boldsymbol{Y}^{T-1}(\tilde{\boldsymbol{d}}^{T-1})=\boldsymbol{y}^{T-1}].
\end{align*}
Define a value function at period $T$ as $V_{T}(\boldsymbol{y}^{T-1},\tilde{\boldsymbol{d}}^{T-1})\equiv\max_{\tilde{d}_{T}}\mathbb{E}[Y_{T}(\tilde{\boldsymbol{d}})|\boldsymbol{Y}^{T-1}(\tilde{\boldsymbol{d}}^{T-1})=\boldsymbol{y}^{T-1}]$.
Similarly, for each $t=1,...,T-1$, let 
\begin{align*}
\tilde{\delta}_{t}^{*}(\boldsymbol{y}^{t-1},\tilde{\boldsymbol{d}}^{t-1})\equiv & \arg\max_{\tilde{d}_{t}\in[0,1]}\mathbb{E}[V_{t+1}(\boldsymbol{Y}^{t}(\tilde{\boldsymbol{d}}^{t}),\tilde{\boldsymbol{d}}^{t})|\boldsymbol{Y}^{t-1}(\tilde{\boldsymbol{d}}^{t-1})=\boldsymbol{y}^{t-1}]
\end{align*}
and $V_{t}(\boldsymbol{y}^{t-1},\tilde{\boldsymbol{d}}^{t-1})\equiv\max_{\tilde{d}_{t}}\mathbb{E}[V_{t+1}(\boldsymbol{Y}^{t}(\tilde{\boldsymbol{d}}^{t}),\tilde{\boldsymbol{d}}^{t})|\boldsymbol{Y}^{t-1}(\tilde{\boldsymbol{d}}^{t-1})=\boldsymbol{y}^{t-1}]$.
Then, $\tilde{\boldsymbol{\delta}}^{*}(\cdot)=(\tilde{\delta}_{1}^{*},...,\tilde{\delta}_{T}^{*}(\cdot))$.
Since $\{0,1\}\subset[0,1]$, the same argument can apply for the
deterministic regime using the current framework but each maximization
domain being $\{0,1\}$. This analogously defines $\delta_{t}^{*}(\cdot)\in\{0,1\}$
for all $t$, and then $\boldsymbol{\delta}^{*}(\cdot)=(\delta_{1}^{*},...,\delta_{T}^{*}(\cdot))$,
similarly as in \citet{murphy2003optimal}.

Now, for the maximization problems above, let $\tilde{W}_{t}(\tilde{\boldsymbol{d}}^{t},\boldsymbol{y}^{t-1})$
represent the objective function at $t$ for $2\le t\le T$ with $\tilde{W}_{1}(\tilde{d}_{1})$
for $t=1$. By the definition of the stochastic regime, it satisfies
that
\begin{align*}
\tilde{W}_{t}(\tilde{\boldsymbol{d}}^{t},\boldsymbol{y}^{t-1}) & =\tilde{d}_{t}W_{t}(1,\tilde{\boldsymbol{d}}^{t-1},\boldsymbol{y}^{t-1})+(1-\tilde{d}_{t})W_{t}(0,\tilde{\boldsymbol{d}}^{t-1},\boldsymbol{y}^{t-1})\\
 & =\tilde{d}_{t}\left\{ W_{t}(1,\tilde{\boldsymbol{d}}^{t-1},\boldsymbol{y}^{t-1})-W_{t}(0,\tilde{\boldsymbol{d}}^{t-1},\boldsymbol{y}^{t-1})\right\} +W_{t}(0,\tilde{\boldsymbol{d}}^{t-1},\boldsymbol{y}^{t-1}).
\end{align*}
Therefore, $W_{t}(1,\tilde{\boldsymbol{d}}^{t-1},\boldsymbol{y}^{t-1})\ge W_{t}(0,\tilde{\boldsymbol{d}}^{t-1},\boldsymbol{y}^{t-1})$
or $1=\arg\max_{\tilde{d}_{t}\in\{0,1\}}\tilde{W}_{t}(\tilde{\boldsymbol{d}}^{t},\boldsymbol{y}^{t-1})$
if and only if $1=\arg\max_{\tilde{d}_{t}\in[0,1]}\tilde{W}_{t}(\tilde{\boldsymbol{d}}^{t},\boldsymbol{y}^{t-1})$.
Symmetrically, $0=\arg\max_{\tilde{d}_{t}\in\{0,1\}}\tilde{W}_{t}(\tilde{\boldsymbol{d}}^{t},\boldsymbol{y}^{t-1})$
if and only if $0=\arg\max_{\tilde{d}_{t}\in[0,1]}\tilde{W}_{t}(\tilde{\boldsymbol{d}}^{t},\boldsymbol{y}^{t-1})$.
This implies that $\tilde{\delta}_{t}^{*}(\cdot)=\delta_{t}^{*}(\cdot)$
for all $t=1,...,T$, which proves the theorem.\end{proof}

\subsection{Matrices in Section \ref{subsec:Data-Generating-Framework}\label{sec:Matrices}}

We show how to construct matrices $A_{k}$ and $B$ in \eqref{eq:LP_welfare}
and \eqref{eq:LP_constraint} for the linear programming \eqref{eq:LP}.
The construction of $A_{k}$ and $B$ uses the fact that any linear
functional of $\Pr[\boldsymbol{Y}(\boldsymbol{d})=\boldsymbol{y},\boldsymbol{D}(\boldsymbol{z})=\boldsymbol{d}]$
can be characterized as a linear combination of $q_{s}$. Although
the notation of this section can be somewhat heavy, if one is committed
to the use of linear programming instead of an analytic solution,
most of the derivation can be systematically reproduced in a standard
software, such as MATLAB and Python.

Consider $B$ first. By Assumption SX, we have 
\begin{align}
p_{\boldsymbol{y},\boldsymbol{d}|\boldsymbol{z}} & =\Pr[\boldsymbol{Y}(\boldsymbol{d})=\boldsymbol{y},\boldsymbol{D}(\boldsymbol{z})=\boldsymbol{d}]\nonumber \\
 & =\Pr[\boldsymbol{Y}(\boldsymbol{y}^{T-1},\boldsymbol{d})=\boldsymbol{y},\boldsymbol{D}(\boldsymbol{y}^{T-1},\boldsymbol{d}^{T-1},\boldsymbol{z})=\boldsymbol{d}]\nonumber \\
 & =\Pr[S:Y_{t}(\boldsymbol{y}^{t-1},\boldsymbol{d}^{t})=y_{t},D_{t}(\boldsymbol{y}^{t-1},\boldsymbol{d}^{t-1},\boldsymbol{z}^{t})=d_{t}\quad\forall t]\nonumber \\
 & =\sum_{s\in\mathcal{S}_{\boldsymbol{y},\boldsymbol{d}|\boldsymbol{z}}}q_{s},\label{eq:p_ydz}
\end{align}
where $\mathcal{S}_{\boldsymbol{y},\boldsymbol{d}|\boldsymbol{z}}\equiv\{S=\beta(\tilde{\boldsymbol{S}}):Y_{t}(\boldsymbol{y}^{t-1},\boldsymbol{d}^{t})=y_{t},D_{t}(\boldsymbol{y}^{t-1},\boldsymbol{d}^{t-1},\boldsymbol{z}^{t})=d_{t}\text{ }\forall t\}$,
$\tilde{\boldsymbol{S}}\equiv(\tilde{S}_{1},...,\tilde{S}_{T})$ with
$\tilde{S}_{t}\equiv(\{Y_{t}(\boldsymbol{y}^{t-1},\boldsymbol{d}^{t})\},\{D_{t}(\boldsymbol{y}^{t-1},\boldsymbol{d}^{t-1},\boldsymbol{z}^{t})\})$,
and $\beta(\cdot)$ is a one-to-one map that transforms a binary sequence
into a decimal value. Then, for a $1\times d_{q}$ vector $B_{\boldsymbol{y},\boldsymbol{d}|\boldsymbol{z}}$,
\begin{align*}
p_{\boldsymbol{y},\boldsymbol{d}|\boldsymbol{z}} & =\sum_{s\in\mathcal{S}_{\boldsymbol{y},\boldsymbol{d}|\boldsymbol{z}}}q_{s}=B_{\boldsymbol{y},\boldsymbol{d}|\boldsymbol{z}}q
\end{align*}
and the $d_{p}\times d_{q}$ matrix $B$ stacks $B_{\boldsymbol{y},\boldsymbol{d}|\boldsymbol{z}}$
so that $p=Bq$.

For $A_{k}$, recall $W_{\boldsymbol{\delta}_{k}}$ is a linear functional
of $q_{\boldsymbol{\delta}_{k}}(\boldsymbol{y})\equiv\Pr[\boldsymbol{Y}(\boldsymbol{\delta}_{k}(\cdot))=\boldsymbol{y}]$.
 For given $\boldsymbol{\delta}(\cdot)$, by repetitively applying
the law of iterated expectation, we can show
\begin{align}
 & \Pr[\boldsymbol{Y}(\boldsymbol{\delta}(\cdot))=\boldsymbol{y}]\nonumber \\
= & \Pr[Y_{T}(\boldsymbol{d})=y_{T}|\boldsymbol{Y}^{T-1}(\boldsymbol{d}^{T-1})=\boldsymbol{y}^{T-1}]\nonumber \\
 & \times\Pr[Y_{T-1}(\boldsymbol{d}^{T-1})=y_{T-1}|\boldsymbol{Y}^{T-2}(\boldsymbol{d}^{T-2})=\boldsymbol{y}^{T-2}]\times\cdots\times\Pr[Y_{1}(d_{1})=y_{1}],\label{eq:dist_Y(delta)2_1}
\end{align}
where, because of the appropriate conditioning in \eqref{eq:dist_Y(delta)2_1},
the bridge variables $\boldsymbol{d}=(d_{1},...,d_{T})$ satisfies
\begin{align*}
d_{1} & =\delta_{1},\\
d_{2} & =\delta_{2}(y_{1},d_{1}),\\
d_{3} & =\delta_{3}(\boldsymbol{y}^{2},\boldsymbol{d}^{2}),\\
 & \vdots\\
d_{T} & =\delta_{T}(\boldsymbol{y}^{T-1},\boldsymbol{d}^{T-1}).
\end{align*}
Therefore, \eqref{eq:dist_Y(delta)2_1} can be viewed as a linear
functional of $\Pr[\boldsymbol{Y}(\boldsymbol{d})=\boldsymbol{y}]$.
To illustrate, when $T=2$, the welfare defined as the average counterfactual
terminal outcome satisfies 
\begin{align}
E[Y_{T}(\boldsymbol{\delta}(\cdot))] & =\sum_{y_{1}}\Pr[Y_{2}(\delta_{1},\delta_{2}(Y_{1}(\delta_{1}),\delta_{1}))=1|Y_{1}(\delta_{1})=y_{1}]\Pr[Y_{1}(\delta_{1})=y_{1}]\nonumber \\
 & =\sum_{y_{1}}\Pr[Y_{2}(\delta_{1},\delta_{2}(y_{1},\delta_{1}))=1,Y_{1}(\delta_{1})=y_{1}].\label{eq:ex_T2}
\end{align}
Then, for a chosen $\boldsymbol{\delta}(\cdot)$, the values $\delta_{1}=d_{1}$
and $\delta_{2}(y_{1},\delta_{1})=d_{2}$ at which $Y_{2}(\delta_{1},\delta_{2}(y_{1},\delta_{1}))$
and $Y_{1}(\delta_{1})$ are defined is given in Table \ref{tab:regimes}
as shown in the main text. Therefore, $E[Y_{2}(\boldsymbol{\delta}(\cdot))]$
can be written as a linear functional of $\Pr[Y_{2}(d_{1},d_{2})=y_{2},Y_{1}(d_{1})=y_{1}]$.

Now, define a linear functional $h_{k}(\cdot)$ that (i) marginalizes
$\Pr[\boldsymbol{Y}(\boldsymbol{d})=\boldsymbol{y},\boldsymbol{D}(\boldsymbol{z})=\boldsymbol{d}]$
into $\Pr[\boldsymbol{Y}(\boldsymbol{d})=\boldsymbol{y}]$ and then
(ii) maps $\Pr[\boldsymbol{Y}(\boldsymbol{d})=\boldsymbol{y}]$ into
$\Pr[\boldsymbol{Y}(\boldsymbol{\delta}_{k}(\cdot))=\boldsymbol{y}]$
according to \eqref{eq:dist_Y(delta)2_1}. But recall that $\Pr[\boldsymbol{Y}(\boldsymbol{d})=\boldsymbol{y},\boldsymbol{D}(\boldsymbol{z})=\boldsymbol{d}]=\sum_{s\in\mathcal{S}_{\boldsymbol{y},\boldsymbol{d}|\boldsymbol{z}}}q_{s}$
by \eqref{eq:p_ydz}. Consequently, we have 
\begin{align*}
W_{k} & =f(q_{\boldsymbol{\delta}_{k}})=f(\Pr[\boldsymbol{Y}(\boldsymbol{\delta}_{k}(\cdot))=\cdot])\\
 & =f\circ h_{k}(\Pr[\boldsymbol{Y}(\cdot)=\cdot,\boldsymbol{D}(\boldsymbol{z})=\cdot]),\\
 & =f\circ h_{k}\left(\sum_{s\in\mathcal{S}_{\cdot,\cdot|\boldsymbol{z}}}q_{s}\right)\equiv A_{k}q.
\end{align*}
To continue the illustration \eqref{eq:ex_regime8} in the main text,
note that 
\begin{align*}
 & \Pr[\boldsymbol{Y}(1,1)=(1,1)]=\Pr[S:Y_{1}(1)=1,Y_{2}(1,1)=1]=\sum_{s\in\mathcal{S}_{11}}q_{s},
\end{align*}
where $\mathcal{S}_{11}\equiv\{S=\beta(\tilde{S}_{1},\tilde{S}_{2}):Y_{1}(1)=1,Y_{2}(1,1)=1\}$.
Similarly, we have 
\begin{align*}
 & \Pr[\boldsymbol{Y}(1,1)=(0,1)]=\Pr[S:Y_{1}(1)=0,Y_{2}(1,1)=1]=\sum_{s\in\mathcal{S}_{01}}q_{s},
\end{align*}
where $\mathcal{S}_{01}\equiv\{S=\beta(\tilde{S}_{1},\tilde{S}_{2}):Y_{1}(1)=0,Y_{2}(1,1)=1\}$.

\subsection{Proof of Theorem \ref{thm:DAG}}

Let $\mathcal{Q}_{p}\equiv\{q:Bq=p\}\cap\mathcal{Q}$ be the feasible
set. To prove part (i), first note that the sharp DAG can be explicitly
defined as $G(\mathcal{K},\mathcal{E}_{p})$ with
\begin{align*}
\mathcal{E}_{p} & \equiv\{(k,k')\in\mathcal{K}:A_{k}q>A_{k'}q\text{ for all }q\in\mathcal{Q}_{p}\}.
\end{align*}
Here, $A_{k}q>A_{k'}q$ for all $q\in\mathcal{Q}_{p}$ if and only
if $L_{k,k'}>0$ as $L_{k,k'}$ is the sharp lower bound of $(A_{k}-A_{k'})q$
in \eqref{eq:LP}. The latter is because the feasible set $\{q:Bq=p\text{ and }q\in\mathcal{Q}\}$
is convex and thus $\{\Delta_{k,k'}q:Bq=p\text{ and }q\in\mathcal{Q}\}$
is convex, which implies that any point between $[L_{k,k'},U_{k,k'}]$
is attainable.

To prove part (ii), it is helpful to note that $\mathcal{D}_{p}^{*}$
in \eqref{eq:ID_set} can be equivalently defined as
\begin{align*}
\mathcal{D}_{p}^{*} & \equiv\{\boldsymbol{\delta}_{k'}(\cdot):\nexists k\in\mathcal{K}\text{ such that }A_{k}q>A_{k'}q\text{ for all }q\in\mathcal{Q}_{p}\}\\
 & =\{\boldsymbol{\delta}_{k'}(\cdot):A_{k}q\le A_{k'}q\text{ for all }k\in\mathcal{K}\text{ and some }q\in\mathcal{Q}_{p}\}.
\end{align*}
Let $\tilde{\mathcal{D}}_{p}^{*}\equiv\{\boldsymbol{\delta}_{k'}(\cdot):\nexists k\in\mathcal{K}\text{ such that }L_{k,k'}>0\text{ and }k\neq k'\}$.
First, we prove that $\mathcal{D}_{p}^{*}\subset\tilde{\mathcal{D}}_{p}^{*}$.
Note that
\begin{align*}
\mathcal{D}\backslash\tilde{\mathcal{D}}_{p}^{*} & =\{\boldsymbol{\delta}_{k'}:L_{k,k'}>0\text{ for some }k\neq k'\}.
\end{align*}
Suppose $\boldsymbol{\delta}_{k'}\in\mathcal{D}\backslash\tilde{\mathcal{D}}_{p}^{*}$.
Then, for some $k\neq k'$, $(A_{k}-A_{k'})q\ge L_{k,k'}>0$ for all
$q\in\mathcal{Q}_{p}$. Therefore, for such $k$, $A_{k}q>A_{k'}q$
for all $q\in\mathcal{Q}_{p}$, and thus $\boldsymbol{\delta}_{k'}\notin\mathcal{D}_{p}^{*}\equiv\{\arg\max_{\boldsymbol{\delta}_{k}}A_{k}q:q\in\mathcal{Q}_{p}\}$.

Now, we prove that $\tilde{\mathcal{D}}_{p}^{*}\subset\mathcal{D}_{p}^{*}$.
Suppose $\boldsymbol{\delta}_{k'}\in\tilde{\mathcal{D}}_{p}^{*}$.
Then $\nexists k\neq k'$ such that $L_{k,k'}>0$. Equivalently, for
any given $k\neq k'$, either (a) $U_{k,k'}\le0$ or (b) $L_{k,k'}<0<U_{k,k'}$.
Consider (a), which is equivalent to $\max_{q\in\mathcal{Q}_{p}}(A_{k}-A_{k'})q\le0$.
This implies that $A_{k}q\le A_{k'}q$ for all $q\in\mathcal{Q}_{p}$.
Consider (b), which is equivalent to $\min_{q\in\mathcal{Q}_{p}}(A_{k}-A_{k'})q<0<\max_{q\in\mathcal{Q}_{p}}(A_{k}-A_{k'})q$.
This implies that $\exists q\in\mathcal{Q}_{p}$ such that $A_{k}q=A_{k'}q$.
Combining these implications of (a) and (b), it should be the case
that $\exists q\in\mathcal{Q}_{p}$ such that, for all $k\neq k'$,
$A_{k'}q\ge A_{k}q$. Therefore, $\boldsymbol{\delta}_{k}\in\mathcal{D}_{p}^{*}$.
$\boxempty$

\subsection{Alternative Characterization of the Identified Set\label{subsec:Characterizing-the-Identified}}

Given the DAG, the identified set of $\boldsymbol{\delta}^{*}(\cdot)$
can also be obtained as the collection of initial vertices of all
the directed paths of the DAG. For a DAG $G(\mathcal{K},\mathcal{E})$,
a \textit{directed path} is a subgraph $G(\mathcal{K}_{j},\mathcal{E}_{j})$
($1\le j\le J\le2^{\left|\mathcal{K}\right|}$) where $\mathcal{K}_{j}\subset\mathcal{K}$
is a totally ordered set with initial vertex $\tilde{k}_{j,1}$.\footnote{For example, in Figure \ref{fig:partial_order}(a), there are two
directed paths ($J=2$) with $V_{1}=\{1,2,3\}$ ($\tilde{k}_{1,1}=1$)
and $V_{2}=\{2,3,4\}$ ($\tilde{k}_{2,1}=4$).} In stating our main theorem, we make it explicit that the DAG calculated
by the linear programming is a function of the data distribution $p$.

\begin{theorem}\label{thm:ID_set}Suppose Assumptions SX and B hold.
Then, $\mathcal{D}_{p}^{*}$ defined in \eqref{eq:ID_set} satisfies
\begin{align}
\mathcal{D}_{p}^{*} & =\{\boldsymbol{\delta}_{\tilde{k}_{j,1}}(\cdot)\in\mathcal{D}:1\le j\le J\},\label{eq:obtain_D*}
\end{align}
where $\tilde{k}_{j,1}$ is the initial vertex of the directed path
$G(\mathcal{K}_{p,j},\mathcal{E}_{p,j})$ of $G(\mathcal{K},\mathcal{E}_{p})$.\end{theorem}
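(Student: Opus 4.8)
The plan is to reduce the claim to a short order-theoretic fact and then borrow the characterization already established in Theorem \ref{thm:DAG}(ii). That theorem identifies $\mathcal{D}_p^*$ with the set of \emph{maximal elements} of the DAG $G(\mathcal{K},\mathcal{E}_p)$: the regimes $\boldsymbol{\delta}_{k'}(\cdot)$ such that no $k\neq k'$ satisfies $L_{k,k'}>0$. Since an edge $(k,k')\in\mathcal{E}_p$ encodes $W_k>W_{k'}$, a maximal element is precisely a vertex with no incoming edge---a \emph{source} of the DAG. So the entire content of Theorem \ref{thm:ID_set} reduces to the assertion that, in a finite DAG, the set of sources coincides with the set of initial vertices $\{\tilde{k}_{j,1}:1\le j\le J\}$ of its directed paths, where---as the footnote example with $J=2$ makes clear---a directed path means a maximal chain that runs from a source to a sink and cannot be extended at either end.

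First I would verify that every initial vertex is a source. If $\tilde{k}_{j,1}$ were not a source, there would be some $k$ with $(k,\tilde{k}_{j,1})\in\mathcal{E}_p$; prepending $k$ to the chain would produce a strictly longer directed path, contradicting unextendability. Hence $\tilde{k}_{j,1}$ has no incoming edge and is a source, equivalently a maximal element in the sense of Theorem \ref{thm:DAG}(ii).

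Next I would establish the reverse inclusion: every source is the initial vertex of some directed path. Given a source $k'$, I would construct a chain by following outgoing edges greedily; because the graph is finite and acyclic, this terminates at a sink, and the resulting chain cannot be extended backward (as $k'$ is a source) nor forward (as it ends at a sink). An isolated vertex is handled by the degenerate one-element chain. In either case $k'$ appears as some $\tilde{k}_{j,1}$.

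Combining the two inclusions with Theorem \ref{thm:DAG}(ii) delivers \eqref{eq:obtain_D*}. The only genuine subtlety---and thus the step I would take most care over---is fixing the reading of ``directed path'' as a \emph{maximal} chain, consistent with the footnote's enumeration; under any looser reading (e.g.\ allowing sub-chains) interior vertices would spuriously appear as initial vertices and the stated equality would fail. No linear-programming content beyond Theorem \ref{thm:DAG} is needed: the result is essentially the observation that ``source'' and ``initial vertex of a maximal directed path'' are two names for the same object.
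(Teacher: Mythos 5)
Your proposal is correct in substance but takes a genuinely different route from the paper. The paper proves the statement directly from the definition \eqref{eq:ID_set}, without invoking Theorem \ref{thm:DAG}: it argues that (a) a non-initial vertex of a directed path is strictly dominated, at every feasible $q$, by the path's earlier vertices, and since the union of the directed paths exhausts $G(\mathcal{K},\mathcal{E}_{p})$, any $\boldsymbol{\delta}^{*}(\cdot;q)$ must be an initial vertex; and (b) conversely, any initial vertex is an optimizer for some feasible $q$---a step the paper argues somewhat informally and which is really the ``common $q$'' content supplied by the proof of Theorem \ref{thm:DAG}(ii). Your reduction quarantines all of that identification content inside Theorem \ref{thm:DAG}(ii) and leaves only the order-theoretic identity ``maximal elements (sources) $=$ initial vertices of maximal chains.'' This is more modular, and it mirrors how the paper itself proves Theorem \ref{thm:topo_sort_ID_set} by reduction to the present theorem. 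You are also right to flag the definitional subtlety (directed path $=$ maximal, i.e.\ unextendable, chain, as the footnote's enumeration confirms); note additionally that your prepending step tacitly uses transitivity of $\mathcal{E}_{p}$---which does hold, since $L_{k,k''}\ge L_{k,k'}+L_{k',k''}$---so that the prepended vertex is comparable to \emph{every} element of the chain and not merely to its initial vertex.

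One step needs a small repair. In the second inclusion you claim the greedily constructed source-to-sink chain ``cannot be extended backward nor forward'' and is therefore a directed path. Under the maximal-chain reading you yourself insist on, unextendability at the two ends is not enough: a chain can fail maximality because a vertex can be \emph{inserted} between two of its elements. For example, with edges $a\rightarrow b$, $b\rightarrow c$, $a\rightarrow c$ (a transitively closed DAG, so consistent with an identified ordering), the greedy choice $a\rightarrow c$ yields the chain $\{a,c\}$, which starts at a source and ends at a sink yet is not maximal, since $\{a,b,c\}$ is a chain containing it; hence $\{a,c\}$ is not a directed path in the paper's sense. The fix is immediate: extend your chain (or simply the singleton $\{k'\}$) to a maximal chain, which exists by finiteness of $\mathcal{K}$; because the source $k'$ has no incoming edge, it remains the maximum, hence the initial vertex, of any chain containing it. With that one-line correction your argument is complete.
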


\begin{proof}Let $\tilde{\mathcal{D}}^{*}\equiv\{\boldsymbol{\delta}_{\tilde{k}_{j,1}}(\cdot)\in\mathcal{D}:1\le j\le J\}$.
First, note that since $\tilde{k}_{j,1}$ is the initial vertex of
directed path $j$, it should be that $W_{\tilde{k}_{j,1}}\ge W_{\tilde{k}_{j,m}}$
for any $\tilde{k}_{j,m}$ in that path by definition. We begin by
supposing $\mathcal{D}_{p}^{*}\supset\tilde{\mathcal{D}}^{*}$. Then,
there exist $\boldsymbol{\delta}^{*}(\cdot;q)=\arg\max_{\boldsymbol{\delta}_{k}(\cdot)\in\mathcal{D}}A_{k}q$
for some $q$ that satisfies $Bq=p$ and $q\in\mathcal{Q}$, but which
is not the initial vertex of any directed path. Such $\boldsymbol{\delta}^{*}(\cdot;q)$
cannot be other (non-initial) vertices of any paths as it is contradiction
by the definition of $\boldsymbol{\delta}^{*}(\cdot;q)$. But the
union of all directed paths is equal to the original DAG, therefore
there cannot exist such $\boldsymbol{\delta}^{*}(\cdot;q)$.

Now suppose $\mathcal{D}_{p}^{*}\subset\tilde{\mathcal{D}}^{*}$.
Then, there exists $\boldsymbol{\delta}_{\tilde{k}_{j,1}}(\cdot)\neq\boldsymbol{\delta}^{*}(\cdot;q)=\arg\max_{\boldsymbol{\delta}_{k}(\cdot)\in\mathcal{D}}A_{k}q$
for some $q$ that satisfies $Bq=p$ and $q\in\mathcal{Q}$. This
implies that $W_{\tilde{k}_{j,1}}<W_{\tilde{k}}$ for some $\tilde{k}$.
But $\tilde{k}$ should be a vertex of the same directed path (because
$W_{\tilde{k}_{j,1}}$ and $W_{\tilde{k}}$ are ordered), but then
it is contradiction as $\tilde{k}_{j,1}$ is the initial vertex. Therefore,
$\mathcal{D}_{p}^{*}=\tilde{\mathcal{D}}^{*}$.\end{proof}

\subsection{Proof of Theorem \ref{thm:topo_sort_ID_set}}

Given Theorem \ref{thm:ID_set}, proving $\tilde{\mathcal{D}}^{*}=\{\boldsymbol{\delta}_{k_{l,1}}(\cdot):1\le l\le L_{G}\}$
will suffice. Recall $\tilde{\mathcal{D}}^{*}\equiv\{\boldsymbol{\delta}_{\tilde{k}_{j,1}}(\cdot)\in\mathcal{D}:1\le j\le J\}$
where $\tilde{k}_{j,1}$ is the initial vertex of the directed path
$G(\mathcal{K}_{p,j},\mathcal{E}_{p,j})$. When all topological sorts
are singletons, the proof is trivial so we rule out this possibility.
Suppose $\tilde{\mathcal{D}}^{*}\supset\{\boldsymbol{\delta}_{k_{l,1}}(\cdot):1\le l\le L_{G}\}$.
Then, for some $l$, there should exist $\boldsymbol{\delta}_{k_{l,m}}(\cdot)$
for some $m\neq1$ that is contained in $\tilde{\mathcal{D}}^{*}$
but not in $\{\boldsymbol{\delta}_{k_{l,1}}(\cdot):1\le l\le L_{G}\}$,
i.e., that satisfies either (i) $W_{k_{l,1}}>W_{k_{l,m}}$ or (ii)
$W_{k_{l,1}}$ and $W_{k_{l,m}}$ are incomparable and thus either
$W_{k_{l',1}}>W_{k_{l,m}}$ for some $l'\neq l$ or $W_{k_{l,m}}$
is a singleton in another topological sort. Consider case (i). If
$\boldsymbol{\delta}_{k_{l,1}}(\cdot)\in\mathcal{D}_{j}$ for some
$j$, then it should be that $\boldsymbol{\delta}_{k_{l,m}}(\cdot)\in\mathcal{D}_{j}$
as $\boldsymbol{\delta}_{k_{l,1}}(\cdot)$ and $\boldsymbol{\delta}_{k_{l,m}}(\cdot)$
are comparable in terms of welfare, but then $\boldsymbol{\delta}_{k_{l,m}}(\cdot)\in\tilde{\mathcal{D}}^{*}$
contradicts the fact that $\boldsymbol{\delta}_{k_{l,1}}(\cdot)$
the initial vertex of the topological sort. Consider case (ii). The
singleton case is trivially rejected since if the topological sort
a singleton, then $\boldsymbol{\delta}_{k_{l,m}}(\cdot)$ should have
been already in $\{\boldsymbol{\delta}_{k_{l,1}}(\cdot):1\le l\le L_{G}\}$.
In the other case, since the two welfares are not comparable, it should
be that $\boldsymbol{\delta}_{k_{l,m}}(\cdot)\in\mathcal{D}_{j'}$
for $j'\neq j$. But $\boldsymbol{\delta}_{k_{l,m}}(\cdot)$ cannot
be the one that delivers the largest welfare since $W_{k_{l',1}}>W_{k_{l,m}}$
where $\boldsymbol{\delta}_{k_{l',1}}(\cdot)$. Therefore $\boldsymbol{\delta}_{k_{l,m}}(\cdot)\in\tilde{\mathcal{D}}^{*}$
is contradiction. Therefore there is no element in $\tilde{\mathcal{D}}^{*}$
that is not in $\{\boldsymbol{\delta}_{k_{l,1}}(\cdot):1\le l\le L_{G}\}$.

Now suppose $\tilde{\mathcal{D}}^{*}\subset\{\boldsymbol{\delta}_{k_{l,1}}(\cdot):1\le l\le L_{G}\}$.
Then for $l$ such that $\boldsymbol{\delta}_{k_{l,1}}(\cdot)\notin\tilde{\mathcal{D}}^{*}$,
either $W_{k_{l,1}}$ is a singleton or $W_{k_{l,1}}$ is an element
in a non-singleton topological sort. But if it is a singleton, then
it is trivially totally ordered and is the maximum welfare, and thus
$\boldsymbol{\delta}_{k_{l,1}}(\cdot)\notin\tilde{\mathcal{D}}^{*}$
is contradiction. In the other case, if $W_{k_{l,1}}$ is a maximum
welfare, then $\boldsymbol{\delta}_{k_{l,1}}(\cdot)\notin\tilde{\mathcal{D}}^{*}$
is contradiction. If it is not a maximum welfare, then it should be
a maximum in another topological sort, which is contradiction in either
case of being contained in $\{\boldsymbol{\delta}_{k_{l,1}}(\cdot):1\le l\le L_{G}\}$
or not. This concludes the proof that $\tilde{\mathcal{D}}^{*}=\{\boldsymbol{\delta}_{k_{l,1}}(\cdot):1\le l\le L_{G}\}$.
$\boxempty$

\subsection{Proof of Lemma \ref{lem:vyt_1}}

Conditional on $(\boldsymbol{Y}^{t-1},\boldsymbol{D}^{t-1},\boldsymbol{Z}^{t-1})=(\boldsymbol{y}^{t-1},\boldsymbol{d}^{t-1},\boldsymbol{z}^{t-1})$,
it is easy to show that \eqref{eq:model2_only} implies Assumption
M1. Suppose $\pi_{t}(\boldsymbol{y}^{t-1},\boldsymbol{d}^{t-1},\boldsymbol{z}^{t-1},1)>\pi_{t}(\boldsymbol{y}^{t-1},\boldsymbol{d}^{t-1},\boldsymbol{z}^{t-1},1)$
as $\pi_{t}(\cdot)$ is a nontrivial function of $Z_{t}$. Then, we
have 
\begin{align*}
1\{\pi_{t}(\boldsymbol{y}^{t-1},\boldsymbol{d}^{t-1},\boldsymbol{z}^{t-1},1)\ge V_{t}\} & \ge1\{\pi_{t}(\boldsymbol{y}^{t-1},\boldsymbol{d}^{t-1},\boldsymbol{z}^{t-1},0)\ge V_{t}\}
\end{align*}
w.p.1, or equivalently, $D_{t}(\boldsymbol{z}^{t-1},1)\ge D_{t}(\boldsymbol{z}^{t-1},0)$
w.p.1. Suppose $\pi_{t}(\boldsymbol{y}^{t-1},\boldsymbol{d}^{t-1},\boldsymbol{z}^{t-1},1)<\pi_{t}(\boldsymbol{y}^{t-1},\boldsymbol{d}^{t-1},\boldsymbol{z}^{t-1},1)$.
Then, by a parallel argument, $D_{t}(\boldsymbol{z}^{t-1},1)\le D_{t}(\boldsymbol{z}^{t-1},0)$
w.p.1.

Now, we show that Assumption M1 implies \eqref{eq:model2_only} conditional
on $(\boldsymbol{Y}^{t-1},\boldsymbol{D}^{t-1},\boldsymbol{Z}^{t-1})$.
For each $t$, Assumption SX implies $Y_{t}(\boldsymbol{d}^{t}),D_{t}(\boldsymbol{z}^{t})\perp\boldsymbol{Z}^{t}|(\boldsymbol{Y}^{t-1}(\boldsymbol{d}^{t-1}),\boldsymbol{D}^{t-1}(\boldsymbol{z}^{t-1}),\boldsymbol{Z}^{t-1})$,
which in turn implies the following conditional independence: 
\begin{align}
Y_{t}(\boldsymbol{d}^{t}),D_{t}(\boldsymbol{z}^{t}) & \perp\boldsymbol{Z}^{t}|(\boldsymbol{Y}^{t-1},\boldsymbol{D}^{t-1},\boldsymbol{Z}^{t-1}).\label{eq:SX_2}
\end{align}
Conditional on $(\boldsymbol{Y}^{t-1},\boldsymbol{D}^{t-1},\boldsymbol{Z}^{t-1})$,
\eqref{eq:model2_only} and \eqref{eq:SX_2} correspond to Assumption
S-1 in \citet{vytlacil2002independence}. Assumption R(i) and \eqref{eq:SX_2}
correspond to Assumption L-1, and Assumption M1 corresponds to Assumption
L-2 in \citet{vytlacil2002independence}. Therefore, the desired result
follows by Theorem 1 of \citet{vytlacil2002independence}. $\boxempty$

\subsection{Proof of Lemma \ref{lem:vyt_2}}

We are remained to prove that, conditional on $(\boldsymbol{Y}^{t-1},\boldsymbol{D}^{t-1})$,
\eqref{eq:model1} is equivalent to the second part of Assumption
M2. But this proof is analogous to the proof of Lemma \ref{lem:vyt_1}
by replacing the roles of $D_{t}$ and $Z_{t}$ with those of $Y_{t}$
and $D_{t}$, respectively. Therefore, we have the desired result.
$\boxempty$

\bibliographystyle{ecta}
\bibliography{optDTR}

\end{document}